\newtheorem{theorem}{Theorem}
\newtheorem{definition}[theorem]{Definition}
\newtheorem{proposition}{Proposition}
\newtheorem{lemma}{Lemma}
\newcommand{\sym}[1]{\mathfrak{S}_{#1}}
\newcommand{\perm}[1]{S_{#1}}
\newcommand{\permtwo}[2]{\perm{#1}\times\perm{#2}}
\newcommand{\permthree}[3]{\perm{#1}\times\perm{#2}\times\perm{#3}}
\newcommand{\Shift}[1]{\Sigma_{#1}}
\newcommand{\ints}{\mathbb{Z}}
\newcommand{\torus}[2]{\left(\ints/{#1}\ints \right)^{#2}}
\newcommand{\intp}[1]{\ints/{#1}\ints}
\newcommand{\iterwr}[2]{\perm{#1}^{\wr{#2}}}
\newcommand{\homsched}{\rho}
\newcommand{\homloc}{\rho_d}
\newcommand{\hominp}{\rho_l}
\newcommand{\homtime}{\rho_T}
\newcommand{\shift}{\sigma_{\to}}
\newcommand{\add}[1]{+_{#1}}
\newcommand{\subt}[1]{-_{#1}}
\newcommand{\gen}[1]{\langle{#1}\rangle}
\newcommand{\tinc}{\Delta}
\newcommand{\loc}[1]{l_{#1}}
\newcommand{\sched}{f}
\newcommand{\inp}[1]{d_{#1}}
\newcommand{\timepr}[1]{T_{#1}}
\newcommand{\defn}[1]{{\textit{\textbf{\boldmath #1}}}}
\newenvironment{closeitemize}
    {\begin{list}{$\bullet$}
    {
         \setlength{\itemsep}{-0.3\baselineskip}
     \setlength{\topsep}{0.15\baselineskip}
     \setlength{\parskip}{0pt}}}
    {\end{list}}
\newcounter{ccount}
\newenvironment{closeenum}
    {\begin{list}{\arabic{ccount}.}
    {\usecounter{ccount}\setlength{\itemsep}{-0.3\baselineskip}
     \setlength{\topsep}{0.15\baselineskip}
     \setlength{\parskip}{0pt}}}
    {\end{list}}
\begin{document}

\title{ Using Symmetry to Schedule Classical Matrix Multiplication}

\author{
Harsha Vardhan Simhadri\\
Lawrence Berkeley National Lab\\
\texttt{harshas@lbl.gov}
}
\maketitle
\begin{abstract}
Presented with a new machine with a specific interconnect topology,
algorithm designers use intuition about the symmetry of the algorithm
to design time and communication-efficient schedules that map the
algorithm to the machine.  Is there a systematic procedure for
designing schedules?  We present a new technique to design schedules
for algorithms with no non-trivial dependencies, focusing on the
classical matrix multiplication algorithm.

We model the symmetry of algorithm with the set of instructions $X$ as
the action of the group $\sym{X}$ formed by the compositions of
bijections from the set $X$ to itself.  We model the machine as the
action of the group $N\times \tinc$, where $N$ and $\tinc$ represent
the interconnect topology and time increments respectively, on the set
$P\times T$ of processors iterated over ``time steps''.  We model
schedules as ``symmetry-preserving'' {\it equivariant} maps between
the set $X$ and a subgroup of its symmetry $\sym{X}$ and the set
$P\times T$ with the symmetry $N\times\ints$.  Such equivariant maps
are the solutions of a set of algebraic equations involving group
homomorphisms.  We associate time and communication costs with the
solutions to these equations.

We solve these equations for the classical matrix multiplication
algorithm and show that equivariant maps correspond to time- and
communication-efficient schedules for many topologies. We recover well
known variants including the Cannon's algorithm \cite{Cannon} and the
communication-avoiding ``2.5D'' algorithm \cite{SD11} for toroidal
interconnects, systolic computation for planar hexagonal VLSI arrays
\cite{KungVLSIalgos}, recursive algorithms for fat-trees
\cite{Leiserson85}, the cache-oblivious algorithm for the ideal cache
model \cite{FLPR99}, and the space-bounded schedule
\cite{CSBR13,BFGS11} for the parallel memory hierarchy model
\cite{ACF93}.  This suggests that the design of a schedule for a new
class of machines can be motivated by solutions to algebraic
equations.

\end{abstract}



\vfill

\renewcommand{\vspace}[1]{}

\section{Introduction}
Parallel computers are varied in their network and memory
characteristics so that no single version of an algorithm for a
problem works uniformly well across all machines.  This poses a
recurring challenge to algorithm designers who are forced to engineer
``new algorithms'' for a target machine.  Research literature as well
as textbooks on parallel algorithms~\cite{Leighton} list dozens of
variants for such algorithms that are suited to different network
topologies and memory hierarchies. Algorithms for fundamental problems
like matrix multiplication are re-engineered for every new
architecture.

Many of these ``new algorithms'' are often simply a new schedule or a
different data layout of a well known algorithm. The redesign
optimizes the location and the order of execution of the instructions
in the algorithm to minimize communication and running time on the
target machine.  This machine-specific schedule design is, more often
that not, done through human intuition about the symmetry of the
algorithm and the machine, as opposed to a systematic or automated
procedure.  We address this gap in the case of algorithms with no
non-trivial dependencies -- algorithms whose instructions can be
executed in any order (e.g. matrix multiplication, direct n-body
methods, tensor contractions).  We propose models for capturing the
symmetry of such algorithms and machine characteristics (network
topology and memory hierarchy), and show how to use these models
systematically.

\textbf{Our approach} is inspired by three simple observations.
\textbf{First,} the transformations which leave a set invariant form a
group, called the symmetry group of the set. The transformations can
be seen as the ``actions'' of this group on the set.  Therefore, we
model the symmetry of an algorithm as the group $\sym{X}$ of the
symmetries of its instruction set $X$.

\textbf{Second}, many machines can be described as the action of a
group that models the interconnect ($N$) on the set of nodes some of
which have processing elements ($P$).  Similarly time can be modeled
as the action of a group representing time increments ($\tinc$) on the
set representing time steps ($T$).  Putting these two together, the
movement of data over the network between time steps can be modelled
as the action of the group $N\times \tinc$ on the set $P\times T$.

\textbf{Third}, the schedules of common variants of algorithms seem to
be ``symmetry-preserving'' maps from the instruction set $X$ to the
set $P\times T$ that preserve some subset of the symmetries of
$X$. Such schedules can be modelled as ``equivariant'' maps from $X$
to $P\times T$ that commute with the group actions on these sets. 
\footnote{can also be seen as graph homomorphisms; see
  Sec.~\ref{sec:sched-model}.}  That is, the equivariant map and an
action of the symmetry group can be applied to an element $x\in X$ in
any order with the same result (Fig.~\ref{fig:GH-cd}).

Therefore, we pose the problem of finding a machine-specific schedule
and data layout for an algorithm as one of finding equivariant maps,
thus narrowing our search significantly. We also associate time and
communication costs with these solutions (Section~\ref{sec:model}).
The problem is then reduced to finding the optimal solution to an
instance of algebraic equations.  The solutions to these equations
correspond to homomorphisms between subgroups of the symmetry group
$\sym{X}$ and the group $N\times \tinc$
(Section~\ref{sec:comm-sol}). We use knowledge of the structure of
these groups to enumerate and optimize over feasible solutions.

We demonstrate the effectiveness of this technique with the example of
the classical matrix multiplication algorithm.  The instructions of
this algorithm are indexed by three arrays. Assuming that the addition
is commutative, the indices in each array may be permuted in any
order.  Therefore, the symmetry group of the matrix multiplication
algorithm can be expressed as a product of three permutation groups
(group consisting of permutation of an array) corresponding to the
three indices. The subgroups of a permutation group (a.k.a. symmetric
group) are well-understood \cite{ONanScott}.  We use this knowledge to
derive cost-efficient schedules for various machines, recovering
several well-known variants of classical matrix multiplication
(Section~\ref{sec:matmul}).

Although individual technical results in this paper are not radically
new, the \textbf{principal contribution} of the paper is a fresh and
unifying perspective to the problem of schedule design.  The tools and
the line of reasoning developed in this paper can be used to
systematically derive schedules for future architectures or reason
about different data layouts.

\textbf{Related Work}.  One approach to systematically adapt
algorithms to machines consists of the oblivious paradigms
\cite{BPPS14,CSBR13,BFGS11} which propose automatic---even
online---schedules for machines with tree-like interconnects. The
machines these models represent can be shown to be capable of
executing efficient variants of any parallel algorithm competitively
compared to any other architecture with the same hardware resource
constraints (such as VLSI area).  Algorithms that are provably
(asymptotically) optimal for such machines when coupled with these
schedules can be designed \cite{CR07,BGS10,harsha-thesis,BPPS14} and
achieve competitive performance. However, this approach can not
generate the best schedule for any topology.

Another tool for developing schedules is the Polyhedral Model (also
called the Polytope model) \cite{Lengauer93,Feautrier96}. This model
has its roots in the analysis of recurrence equations \cite{KMW67} and
automatic parallelization of ``DO-loops'' \cite{Lamport74}.  The
principal aim of the model was parallelization of nested loops for
vector architectures. The problem was modeled as Parallel Integer
Programming \cite{Feautrier88} giving rise to many code-generation
tools \cite{AI92,QRW00}.  Loop transformation techniques based on
affine transformations for improving the performance of nested loops
were proposed \cite{WL91-TPDS,LimLam97}, and blocking for locality was
considered \cite{WL91}.  A number of features were incrementally added
to the Polyhedron model including the notion of multi-dimensional time
\cite{Feautrier92-multitime}, data replication for reducing
communication and synchronization \cite{CLS99}, and memory management
\cite{QR00,DSV05,LF98}.  Some of these results are summarized in
\cite{Gri04}. Similar efforts for adapting algorithms to systolic
arrays include \cite{Moldovan82,KL83,Quinton87}.

Our work differs from previous work in that it is based on groups and
their homomorphisms rather than affine transformations which are the
basis for most previous work. Since groups are more expressive, our
model elegantly captures many of the features that were incrementally
retrofitted to earlier models such as data layout, data movement,
schedules, multiple copies of data and multi-dimensional time. The
model is succinct enough to be represented in a diagram
(Fig.~\ref{fig:maincd}) which we will proceed to explain.

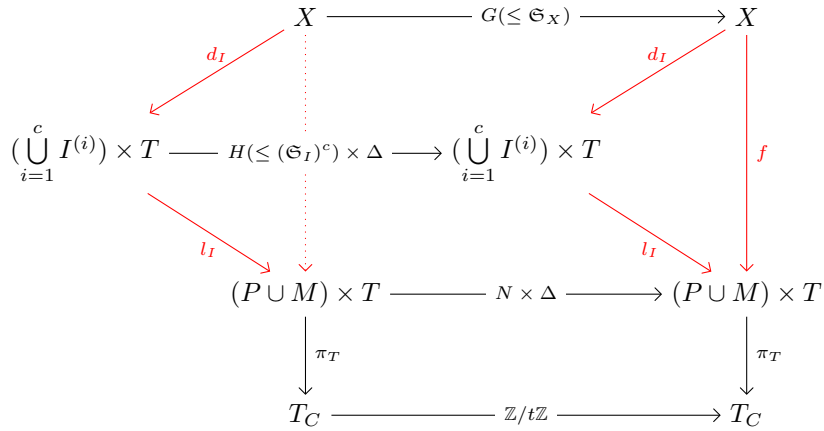
\begin{figure}[!tb]
  \centering
\begin{tikzpicture}[descr/.style={fill=white}]
  \matrix (m) [matrix of math nodes, row sep=3em, column sep=2em]
{
    & X & & X \\
    (\bigcup\limits_{i=1}^{c}I^{(i)})\times T & & (\bigcup\limits_{i=1}^{c}I^{(i)})\times T & \\
    & (P\cup M) \times T & & (P\cup M) \times T \\
    & T_C &  & T_C  \\
};
  \path[-stealth,->,>=angle 90,font=\scriptsize]
    (m-1-2) edge node[descr] {$G(\leq \sym{X})$} (m-1-4)
            edge [red] node[above] {$\inp{I}$} (m-2-1)
            edge [red,dotted] node[auto] {$\sched$} (m-3-2)
    (m-1-4) edge [red] node[right] {$\sched$} (m-3-4)
            edge [red] node[above] {$\inp{I}$} (m-2-3)
    (m-2-1) edge node[descr] {$H(\leq (\sym{I})^c)\times\tinc$} (m-2-3)
            edge [red] node[below] {$\loc{I}$} (m-3-2)
    (m-2-3) edge [red] node[below] {$\loc{I}$} (m-3-4)
    (m-3-2) edge node[descr] {$N\times\tinc$} (m-3-4)
            edge node[right] {$\pi_T$} (m-4-2)
    (m-4-2) edge node[descr] {$\intp{t}$} (m-4-4)
    (m-3-4) edge node[right] {$\pi_T$} (m-4-4);
\end{tikzpicture}
  \caption{ A commutative diagram summarizing all the
    symmetry-preserving maps described in Section~\ref{sec:model}.
    The set $X$ represents the instructions in the algorithm, the set
    $(\cup_{i=1}^{c}I^{(i)})\times T$ represents $c$ copies of the
    input set $I$ iterated over time, and $(P\cup M)\times T$
    represents the set of processors and memory units iterated over
    time.  The group action over each of these sets is depicted with a
    text-overlaid arrow.  The equivariant maps between these sets are
    shown in red and are to be solved for.  These are the functions
    $\sched$, which determines the schedule, $\loc{I}$, which
    determines the data placement, and $\inp{I}$, which determines the
    time and the copy of variable in each variable set used for an
    instruction.  Similar equations can be written for other input and
    output sets.}
  \label{fig:maincd}
\end{figure}


\section{Models for Computation, Machine and Symmetry-Preserving Maps }
\label{sec:model}
We present models for the symmetries of an algorithm in which
instructions can be executed in any order (Sec.~\ref{sec:comp-model}),
and the topology of the target machine (Sec.~\ref{sec:machine-model}).
Schedules are defined as maps from the instruction set to the set of
processors iterated over time.  We will formalize the notion of
symmetry-preserving maps and represent it as commutative diagrams.  We
also incorporate data placements and data movements in to these models
(Sec.~\ref{sec:sched-model}). We assign costs to the maps
corresponding to schedules and data movement (Sec.~\ref{sec:costs})
and add more features to the model (Sec.~\ref{sec:extension}).

\subsection{Computation model and symmetries}
\label{sec:comp-model}
An algorithm consists of a finite set of instructions $X$ and a finite
set of input and output variable sets.  Each instruction accesses one
variable from each input and output variable set so that $X$ is a
subset of the direct product of the variable sets. For classical
matrix multiplication the input variable sets are the variables
denoting the entries of the two input matrices, and the output
variable set corresponds to output matrix.  Let $[n]$ denote the set
$\{0,1,\dots,n-1\}$.  For a size $l\times m \times n$ multiplication,
the input variable sets are $A=\{A_{rr'}\}_{r\in[l],r'\in[m]}$ and
$B=\{B_{rr'}\}_{r\in[m],r'\in[n]}$, and the output variable set is
$C=\{C_{rr'}\}_{r\in[n],r'\in[l]}$. The set of instructions $X$ is
$\{(A_{ij},B_{jk},C_{ki})\}_{i\in[l],j\in[m],k\in[n]} \subset A\times
B\times C$.

Groups are a standard way to model the symmetries of a set (see
Appendix~\ref{sec:prelims} for definitions of groups, homomorphisms
and group actions).  Consider all bijections (invertible functions)
from the set $X$ to itself. These operations form a group under
composition. This is referred to as the \defn{symmetry group}
$\sym{X}$ of $X$. The symmetry group $\sym{X}$ acts naturally on the
set $X$: the action $\sigma\cdot$ of each element $\sigma\in\sym{X}$
is the bijective map that $\sigma$ represents. We denote this group
action diagrammatically by
\begin{center}
\begin{tikzcd}[row sep=0em,column sep=9em]
  X \arrow{r}[description]{\sym{X}} & X.
\end{tikzcd}
\end{center}
The group action can also be seen as the \textbf{graph} with vertex
set $X$ and edges $x\mapsto \sigma\cdot x$ for each $x\in X$ and
$\sigma\in\sym{X}$.
  
In the case of matrix multiplication, the bijections correspond to
separate permutations on the indices $i\in[l]$, $j\in[m]$ and
$k\in[n]$ (we assume \texttt{+=} in $C_{ki}$\texttt{+=}$A_{ij}\cdot
B_{jk}$ is associative and commutative).  The group formed by the
permutations of $n$ numbers (or symbols) under the composition
operation is called the \defn{symmetric group} of $n$ numbers, and
denoted by $\perm{n}$.  Therefore, the symmetry group of the set of
instructions of classical matrix multiplication algorithm is
isomorphic to the direct product of three symmetric groups $\perm{l}$,
$\perm{m}$ and $\perm{n}$: $\sym{X} \cong
\perm{l}\times\perm{m}\times\perm{n}$.  So we have
\begin{center}
\begin{tikzcd}[row sep=0em,column sep=7em]
{\{(A_{ij},B_{jk},C_{ki})\}_{i\in[l],j\in[m],k\in[n]}} =: X
\arrow{r}[description]{\permthree{l}{m}{n}} & X.
\end{tikzcd}
\end{center}
where $\perm{l}$, $\perm{m}$ and $\perm{n}$ act on indices $i, j$ and
$k$ respectively. In this case, the graph corresponding to the group
action has vertices $X_{ijk}$ indexed by $i,j$ and $k$.  If a triplet
of permutations on $[l],[m]$ and $[n]$ take the indices $i$ to $i'$,
$j$ to $j'$ and $k$ to $k'$ respectively, then there is an edge
between vertices $X_{ijk}$ and $X_{i'j'k'}$ labelled with this triplet
of permutations.

We can similarly model the symmetry of an input set $I$:
\begin{center}
\begin{tikzcd}[row sep=0em, column sep=9em]
  I \arrow{r}[description]{\sym{I}} & I.
\end{tikzcd}
\end{center}
For matrix multiplication, we have for the input set $A$,
\begin{center}
\begin{tikzcd}[row sep=0em,column sep=9em]
{\{A_{rr'}\}_{r\in[l],r'\in[m]}} =: A
 \arrow{r}[description]{\permtwo{l}{m}} & A
\end{tikzcd}
\end{center}
where $\perm{l}$ and $\perm{m}$ act on the indices $r$ and $r'$
respectively.

\subsection{Models for machine topology}
\label{sec:machine-model}
A machine consists of a set of nodes connected by a network. Each node
consists of a finite amount of memory. Some nodes also have a
sequential processing element and we refer to such a node as a
processor.  For simplicity, we will deal with the case where all nodes
are processors until Sec.~\ref{sec:extension}.  Just as in the
computation model, we will model the machine as a group action on a
set.

We start with the simple example of a ``2D-torus'' machine consisting
of $q\times q$ processors connected with a 2D-torus network.  The
network can be modeled as the group $\torus{q}{2}$ where
$(\ints/q\ints)$ is the group of integers $\{0,1,\dots,q-1\}$ with the
group operation '$\add{q}$', addition modulo $q$. The group element
$(i,j)$ indicates traversing $i$ hops to the right and $j$ hops
upwards (with wrap-around), and $(-i,-j)$ represents $i$ hops to the
left and $j$ hops downwards.  The machine can be modeled as the action
of the \defn{network group} $N:=\torus{q}{2}$ on the processor set $P
= \{P_{xy}\}_{x\in[q],y\in[q]}$. The element $(i,j)\in N$ maps
$(i,j)\cdot P_{x,y} \mapsto P_{i\add{q}x,j\add{q}y}$ indicating the
relative position of these two processors.

We add the notion of time steps (not necessarily clock cycles) to the
model with the set $T$. A machine with synchronous time steps across
the processors is modeled as the set $P\times T$ representing the
processor set iterated over time steps. Time increments are modeled as
the action of the \defn{time increment group} $\tinc$ on the set
$T$.  For example, when a machine works for $t$ time steps all of the
same duration, $T$ is represented by the set $\{t_i\}_{i\in[t]}$, and
the increment by the action of $\tinc=\intp{t}$ over $T$.  The set of
possible data movements (communication) over the machine is modeled as
the action of the group $N\times \tinc$ on the set $P\times T$.  We
represent it diagrammatically by
\begin{center}
\begin{tikzcd}[row sep=0em,column sep=9em]
  {P\times T} \arrow{r}[description]{N\times\tinc} & {P \times T}.
\end{tikzcd}
\end{center}

For a 2D-torus with synchronous and uniform time steps, $((i,j),t')
\cdot (P_{x,y},t_0)\mapsto (P_{x\add{q}i,y\add{q}j},t_0\add{t}t')$
corresponds to moving all the variables at the processor $P_{x,y}$ at
time step $t$ to the processor $P_{x\add{q}i,y\add{q}j}$ at step $t'$
(Fig.~\ref{fig:comm-action} left).

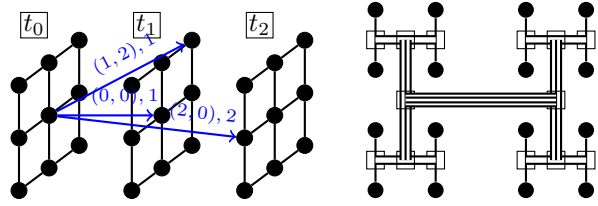
\begin{figure}
  \begin{center}
\begin{tikzpicture}
\tikzstyle{vertex}=[circle,draw,fill=black,minimum size=6pt,inner sep=1pt,font=\tiny]
\tikzstyle{leftsmall}=[left,font=\tiny]
\tikzstyle{rightsmall}=[right,font=\tiny]
\tikzstyle{edge} = [draw,thick,-,black]
\tikzstyle{invisible} = [draw=none,minimum size=0pt]
\tikzstyle{doubleedge} = [draw,thick,double distance=2pt,black]

\node[vertex] (x000) at (0,1.4) {};
\node[vertex] (x010) at (0,0.7) {};
\node[vertex] (x020) at (0,0) {};
\node[vertex] (x001) at (0.4,1.7) {};
\node[vertex] (x011) at (0.4,1) {};
\node[vertex] (x021) at (0.4,0.3) {};
\node[vertex] (x002) at (0.8,2) {};
\node[vertex] (x012) at (0.8,1.3) {};
\node[vertex] (x022) at (0.8,0.6) {};
\draw[edge] (x000) -- (x001);
\draw[edge] (x001) -- (x002);
\draw[edge] (x010) -- (x011);
\draw[edge] (x011) -- (x012);
\draw[edge] (x020) -- (x021);
\draw[edge] (x021) -- (x022);
\draw[edge] (x000) -- (x010);
\draw[edge] (x010) -- (x020);
\draw[edge] (x001) -- (x011);
\draw[edge] (x011) -- (x021);
\draw[edge] (x002) -- (x012);
\draw[edge] (x012) -- (x022);

\node[vertex] (x100) at (1.5,1.4) {};
\node[vertex] (x110) at (1.5,0.7) {};
\node[vertex] (x120) at (1.5,0) {};
\node[vertex] (x101) at (1.9,1.7) {};
\node[vertex] (x111) at (1.9,1) {};
\node[vertex] (x121) at (1.9,0.3) {};
\node[vertex] (x102) at (2.3,2) {};
\node[vertex] (x112) at (2.3,1.3) {};
\node[vertex] (x122) at (2.3,0.6) {};
\draw[edge] (x100) -- (x101);
\draw[edge] (x101) -- (x102);
\draw[edge] (x110) -- (x111);
\draw[edge] (x111) -- (x112);
\draw[edge] (x120) -- (x121);
\draw[edge] (x121) -- (x122);
\draw[edge] (x100) -- (x110);
\draw[edge] (x110) -- (x120);
\draw[edge] (x101) -- (x111);
\draw[edge] (x111) -- (x121);
\draw[edge] (x102) -- (x112);
\draw[edge] (x112) -- (x122);

\node[vertex] (x200) at (3,1.4) {};
\node[vertex] (x210) at (3,0.7) {};
\node[vertex] (x220) at (3,0) {};
\node[vertex] (x201) at (3.4,1.7) {};
\node[vertex] (x211) at (3.4,1) {};
\node[vertex] (x221) at (3.4,0.3) {};
\node[vertex] (x202) at (3.8,2) {};
\node[vertex] (x212) at (3.8,1.3) {};
\node[vertex] (x222) at (3.8,0.6) {};
\draw[edge] (x200) -- (x201);
\draw[edge] (x201) -- (x202);
\draw[edge] (x210) -- (x211);
\draw[edge] (x211) -- (x212);
\draw[edge] (x220) -- (x221);
\draw[edge] (x221) -- (x222);
\draw[edge] (x200) -- (x210);
\draw[edge] (x210) -- (x220);
\draw[edge] (x201) -- (x211);
\draw[edge] (x211) -- (x221);
\draw[edge] (x202) -- (x212);
\draw[edge] (x212) -- (x222);

\draw[edge,->,color=blue] (x011) -- (x111) node[pos=.7,sloped,above,font=\scriptsize] {$(0,0),1$};
\draw[edge,->,color=blue] (x011) -- (x210) node[pos=.8,sloped,above,font=\scriptsize] {$(2,0),2$};
\draw[edge,->,color=blue] (x011) -- (x102) node[pos=.6,sloped,above,font=\scriptsize] {$(1,2),1$};

\node[draw,rectangle,inner sep=1pt] () at (0.2,2.2) {$t_0$};
\node[draw,rectangle,inner sep=1pt] () at (1.7,2.2) {$t_1$};
\node[draw,rectangle,inner sep=1pt] () at (3.2,2.2) {$t_2$};
\end{tikzpicture}
\hspace{10pt}
\begin{tikzpicture}
\tikzstyle{vertex}=[circle,fill=black,minimum size=6pt,inner sep=0pt,]
\tikzstyle{leftsmall}=[left,font=\scriptsize]
\tikzstyle{edge} = [draw,thick,-,black]
\tikzstyle{invisible} = [draw=none,minimum size=0pt]
\tikzstyle{square} = [draw,rectangle,minimum size=0pt]
\tikzstyle{doubleedge} = [draw,thick,double distance=2pt,black]
\tikzstyle{doublewideedge} = [draw,thick,double distance=3pt,black]
\tikzstyle{doubledwideedge} = [draw,thick,double distance=4pt,black]
\tikzstyle{doubleswideedge} = [draw,thick,double distance=1pt,black]

\node[vertex] (c0000) at (0,0.8) {};
\node[leftsmall] at (c0000) {};
\node[vertex] (c0001) at (0.8,0.8) {};
\node[leftsmall] at (c0001) {};
\node[vertex] (c0010) at (0,0) {};
\node[leftsmall] at (c0010) {};
\node[vertex] (c0011) at (0.8,0) {};
\node[leftsmall] at (c0011) {};
\draw[edge] (c0000) -- (c0010);
\draw[edge] (c0001) -- (c0011);
\node[square] (c000) at (0,0.4) {};
\node[square] (c001) at (0.8,0.4) {};
\draw[doubleedge] (c000.center) -- (c001.center);

\node[vertex] (c0100) at (2,0.8) {};
\node[leftsmall] at (c0100) {};
\node[vertex] (c0101) at (2.8,0.8) {};
\node[leftsmall] at (c0101) {};
\node[vertex] (c0110) at (2,0) {};
\node[leftsmall] at (c0110) {};
\node[vertex] (c0111) at (2.8,0) {};
\node[leftsmall] at (c0111) {};
\draw[edge] (c0100) -- (c0110);
\draw[edge] (c0101) -- (c0111);
\node[square] (c010) at (2,0.4) {};
\node[square] (c011) at (2.8,0.4) {};
\draw[doubleedge] (c010.center) -- (c011.center);

\node[vertex] (c1000) at (0,2.4) {};
\node[leftsmall] at (c1000) {};
\node[vertex] (c1001) at (0.8,2.4) {};
\node[leftsmall] at (c1001) {};
\node[vertex] (c1010) at (0,1.6) {};
\node[leftsmall] at (c1010) {};
\node[vertex] (c1011) at (0.8,1.6) {};
\node[leftsmall] at (c1011) {};
\draw[edge] (c1000) -- (c1010);
\draw[edge] (c1001) -- (c1011);
\node[square] (c100) at (0,2) {};
\node[square] (c101) at (0.8,2) {};
\draw[doubleedge] (c100.center) -- (c101.center);

\node[vertex] (c1100) at (2,2.4) {};
\node[leftsmall] at (c1100) {};
\node[vertex] (c1101) at (2.8,2.4) {};
\node[leftsmall] at (c1101) {};
\node[vertex] (c1110) at (2,1.6) {};
\node[leftsmall] at (c1110) {};
\node[vertex] (c1111) at (2.8,1.6) {};
\node[leftsmall] at (c1111) {};
\draw[edge] (c1100) -- (c1110);
\draw[edge] (c1101) -- (c1111);
\node[square] (c110) at (2,2) {};
\node[square] (c111) at (2.8,2) {};
\draw[doubleedge] (c110.center) -- (c111.center);

\node[square] (c10) at (2.4,0.4) {};
\node[square] (c11) at (2.4,2) {};
\draw[doublewideedge] (c10.center) -- (c11.center);
\draw[edge] (c10.center) -- (c11.center);

\node[square] (c00) at (0.4,0.4) {};
\node[square] (c01) at (0.4,2) {};
\draw[doublewideedge] (c00.center) -- (c01.center);
\draw[edge] (c00.center) -- (c01.center);

\node[square] (c0) at (0.4,1.2) {};
\node[square] (c1) at (2.4,1.2) {};
\draw[doubledwideedge] (c0.center) -- (c1.center);
\draw[doubleswideedge] (c0.center) -- (c1.center);
\end{tikzpicture}
  \caption{(left) Communication model of a 2D-torus of size $3\times
    3$; group action in blue. (right) Fat-tree interconnect with 16
    processors (black circles).}
  \end{center}
\label{fig:comm-action}
\end{figure}

Often, it is natural to consider time steps that are not of the same
duration.  For instance, for tree-like machines such as in the case of
machine with fat-tree network \cite{Leiserson85}
(Fig.~\ref{fig:comm-action} right; see Sec.~\ref{sec:extension} for a
model for fat-tree) or parallel memory hierarchies \cite{ACF93}
(Fig.~\ref{fig:pmh}), it is natural to consider schedules that
progress in time steps of different granularity, one corresponding to
each level in the tree.  For this, we can model time steps as the set
$T=\timepr{1}\times \timepr{2}\times\dots \timepr{k}$, where each
$\timepr{l+1}:= \{t_{l,i}\}_{i\in[t_l]}$ represents $t_l$ steps within
a $\timepr{l}$ superstep.  The group $\tinc=\prod_{l\in[k]}\intp{t_l}$
acting on $T$ models time increments at each granularity.

\subsection{Equations for schedule and data placement}
\label{sec:sched-model}
A \defn{schedule} is a map from the set of instructions $X$ to the set
of processors iterated over time $P\times T$.  We consider those
schedules that are symmetry-preserving equivariant maps.
\begin{definition}[Equivariant maps]
Fix the action of the group $G$ on the set $X$, and the action of the
group $H$ on the set $Y$.  A map $f:X\rightarrow Y$ is
$(G,H)_\rho$-equivariant for the group homomorphism $\rho:G\rightarrow
H$ if $f(g\cdot x) = \rho(g)\cdot f(x)$ for all elements $x\in X$ and
the actions of $g\in G$ and $\rho(g)\in H$. A $(G,G)_{Id}$-equivariant
map is simply called $G$-equivariant.
\end{definition}

\begin{figure}[!htb]
  \centering
\begin{tikzcd}[row sep=3em,column sep=3em]
    X \arrow[red]{d}{f}
      \arrow{r}[description,name=Ggrp]{G}&
    X \arrow[red]{d}{f} & 
    x \arrow[red,mapsto]{d}{f}
      \arrow[mapsto]{r}[below,name=gact]{g\cdot}& 
    g\cdot x \arrow[red,mapsto]{d}{f} \\
    Y \arrow{r}[description,name=Hgrp]{H} & 
    Y & 
    y \arrow[mapsto]{r}[name=hact]{h\cdot} &
    h\cdot y 
\arrow[blue,to path={(Ggrp) -- node[auto] {$\rho$} (Hgrp)}]{}
\arrow[mapsto,blue,to path={(gact) -- node[auto] {$\rho$} (hact)}]{}
\end{tikzcd}
  \caption{Commutative diagram depicting the $(G,H)_\rho$-equivariant
    map $f$ for the homomorphism $\rho:G\rightarrow H$.  The diagram
    on the left indicates that the function $f:X\rightarrow Y$ makes
    the diagram on the right commute (different paths lead to the same
    answer) for all choices of elements $x\in X, y\in Y$ s.t. $y=f(x)$
    and the actions of elements $g\in G, h\in H$ s.t. $h=\rho(g)$.}
  \label{fig:GH-cd}
\end{figure}
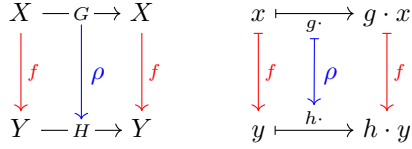

The commutative diagram above can also be seen as a graph
homomorphism.  Let $A$ and $B$ be the graphs corresponding to the
action of the group $G$ on $X$ and group $H$ on $Y$. The edge map
defined by the group homomorphism (blue arrow) and the vertex map
defined by the equivarant map (red arrow) define a graph homomorphism
from $A$ to $B$. The rest of the paper uses algebra to parameterize
such graph homomorphisms and reason about their costs.

The \textbf{schedule} $\sched$ is modeled as an $(G,N\times
\tinc)_\homsched$-equivariant map from $X$ to $P\times T$ for some
choice of subgroup $G\leq\sym{X}$ of the symmetries of $X$ and
homomorphism $\homsched:G\rightarrow N\times\tinc$
(Fig.~\ref{fig:sched-data}). The choice of subgroup $G$ and the
homomorphism to $N\times\tinc$ significantly narrows down the possible
equivariant maps to a few parameters (Sec. \ref{sec:comm-sol}).

To trace the movement of the input
variables over time, we consider the set $I \times T$, and define the
action of the group $\sym{I}\times \tinc$ on the set $I\times T$ as
the natural extension of the action of $\sym{I}$ on $I$ and the action
of $\tinc$ on $t\in T$ (similar to the case of $P\times T$).  We
denote this by
\begin{center}
\begin{tikzcd}[row sep=0em,column sep=9em]
  {I \times T} \arrow{r}[description]{\sym{I}\times \tinc} & {I \times T}.
\end{tikzcd}
\end{center}
The location of the input set $I$ on the machine is modeled as an
$(H\times \tinc,N\times\tinc)_{\homloc}$-equivariant map from the set
$I\times T$ to the set $P\times T$, for some subgroup $H\leq\sym{I}$
and homomorphism $\homloc:H\times\tinc \rightarrow N\times\tinc$ of
the form
\[
\homloc = (\mu:H\times \tinc\rightarrow N) \times Id_{\tinc},
\]
where $\mu$
reflects its movement across time steps.  Similar equations can be written for other input and output
sets.

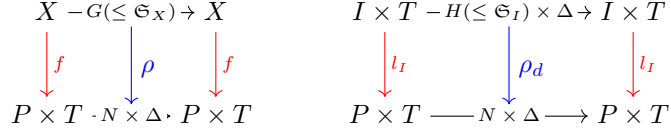
\begin{figure}[!htb]
  \centering
\begin{tikzcd}[row sep=2.5em,column sep=3em]
    X \arrow[red]{d}{\sched}
      \arrow{r}[description,name=Xact]{G(\leq\sym{X})}&
    X \arrow[red]{d}{\sched} & 
    {I\times T} \arrow[red]{d}{\loc{I}}
      \arrow{rr}[description,name=Iact]{H(\leq\sym{I})\times\tinc}& &
    {I\times T} \arrow[red]{d}{\loc{I}} \\
    {P\times T} \arrow{r}[description,name=PTact]{N\times\tinc} & 
    {P\times T} & 
    {P\times T} \arrow{rr}[description,name=Pact]{N\times\tinc} & &
    {P\times T} 
\arrow[blue,to path={(Xact) -- node[auto] {$\homsched$}(PTact)}]{}
\arrow[blue,to path={(Iact) -- node[auto] {$\homloc$}(Pact)}]{}
\end{tikzcd}
  \caption{ Commutative diagrams depicting the schedule $\sched$ and
    data placement function $\loc{I}$ in red.  }
  \label{fig:sched-data}
\end{figure}

The schedule and data placement have to be consistent with each other
so that the input or output element required by an instruction at a
particular processor and time is present there.  We model this
constraint as an $(G,H\times\tinc)_{\hominp}$-equivariant map
$\inp{I}:X \rightarrow I\times T$. For consistency, we add the
constraint $\homsched=\homloc \circ\hominp$.

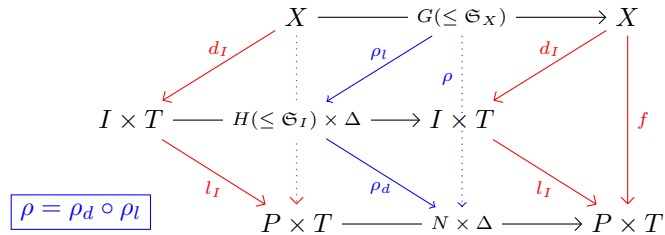
\begin{figure}[!tb]
  \centering
\begin{tikzpicture}[descr/.style={fill=white}]
  \matrix (m) [matrix of math nodes, fill=white,row sep=2.5em, column sep=3em]
{
    & X & & X \\
    I\times T & & I\times T & \\
    & P \times T & & P \times T \\
};
  \path[-stealth,->,>=angle 90,font=\scriptsize]
    (m-1-2) edge node[descr] (sched) {$G(\leq\sym{X})$} (m-1-4)
            edge [red] node[above] {$\inp{I}$} (m-2-1)
            edge [red,dotted] node[auto] {$\sched$} (m-3-2)
    (m-1-4) edge [red] node[right] {$\sched$} (m-3-4)
            edge [red] node[above] {$\inp{I}$} (m-2-3)
    (m-2-1) edge node[descr] (data) {$H(\leq\sym{I})\times\tinc$} (m-2-3)
            edge [red] node[below] {$\loc{I}$} (m-3-2)
    (m-2-3) edge [red] node[below] {$\loc{I}$} (m-3-4)
    (m-3-2) edge node[descr] (machine) {$N\times\tinc$} (m-3-4);

\draw[->,color=blue,dotted] 
(sched) -- node[left,near start,font=\scriptsize] {$\homsched$} (machine);
\draw[->,color=blue] 
(sched) -- node[above,font=\scriptsize] {$\hominp$} (data);
\draw[->,color=blue]
(data) -- node[below,font=\scriptsize] {$\homloc$} (machine);

\node[draw,blue,rectangle] at ([yshift=1.5em]current bounding box.south west){$\homsched=\homloc\circ\hominp$};
\end{tikzpicture}
  \caption{ The equivariant maps for the input set  $I$ are shown in red
    and are to be solved for.}
  \label{fig:simplecd}
\end{figure}

The commutative diagram in Fig.~\ref{fig:simplecd} represents a set of
constraints that a valid schedule and data placement must satisfy. The
principal constraint here that narrows the search for a schedule is
equivariance. Further constraints include memory limits at each
node. To find a feasible schedule and data placement, we ``solve''
this diagram for different choices of homomorphisms (see
Sec.~\ref{sec:comm-sol} for details). Each solution can be assigned
time and communication costs as in Sec.~\ref{sec:costs}.

\subsection{Time and Communication Costs}
\label{sec:costs}
A span of $t$ clock cycles, represented by the set $T_C$, is modeled
as the action of $\intp{t}$ on $T_C$.  The instructions in a schedule
can be assigned time by projecting $P\times T$ down to $T$ and further
down to clock cycles $T_C$ (the function $\pi_T$ in
Fig.~\ref{fig:costs}). The homomorphism $\homtime:N\times\tinc
\rightarrow \intp{t}$ flattens and scales time steps.  We consider
only those $\homtime$ that disregard $N$, $\homtime:n\times e_\tinc
\mapsto e_{\intp{t}}$ for all $n\in N$, so that we can overload the
notation $\homtime$ with $\homtime:\tinc\rightarrow\intp{t}$.  If we
have $\tinc = \intp{t}$ and each time step is five processor cycles
long, we have $\homtime: x\mapsto 5x$. We can flatten $k$ nested
levels of supersteps with $2$ steps per level, that is
$T=\prod_{l\in[k]} \{t_{l,0},t_{l,1}\}$ and $\tinc=(\intp{2})^k$,
using $\homtime:(\intp{2})^k \rightarrow \intp{2^k}$ that maps
$\homtime:(0,.._{i-1},1,0,..)\mapsto 2^{i-1}$ so that the $l$-th level
superstep lasts for $2^l$ clock cycles. We can choose any combination
of stretching and flattening to model the number of clock cycles
needed for communication between supersteps.

\begin{figure}[!htb]
  \centering
\begin{tikzcd}[row sep=2em,column sep=3em]
    X \arrow[red]{d}{\sched}
      \arrow{r}[description]{G(\leq\sym{X})}&
    X \arrow[red]{d}{\sched} & 
    X \arrow[red]{d}{\inp{I}}
      \arrow{rr}[description]{G(\leq\sym{X})}&&
    X \arrow[red]{d}{\inp{I}} \\
    {P\times T} \arrow{d}{\pi_T}
      \arrow{r}[description,name=time]{N\times\tinc} & 
    {P\times T} \arrow{d}{\pi_T}& 
    {I\times T} \arrow[red]{d}{\loc{I}}
      \arrow{rr}[description,name=Iact]{H(\leq\sym{I})\times\tinc}& &
    {I\times T} \arrow[red]{d}{\loc{I}} \\
    T_C \arrow{r}[description,name=flattime]{\intp{t}} &
    T_C &
    {P\times T} \arrow{rr}[description,name=Pact]{N\times\tinc} & &
    {P\times T}
\arrow[blue,to path={(time) -- node[auto,font=\scriptsize] {$\homtime$}(flattime)}]{}
\arrow[blue,to path={(Iact) -- node[auto,font=\scriptsize] {$\homloc$}(Pact)}]{}
\end{tikzcd}
  \caption{Time can be traced with the function $\sched\circ\pi_T$ and
    communication cost with the homomorphism $\nu$.  }
  \label{fig:costs}
\end{figure}

The function $\loc{I}$ describes the location of the variables in the
input set $I$ at different time steps. The communication cost
associated with a schedule is the cost of moving $I$ (and other input
and output variable sets) to the processor specified by $\loc{I}$
between time steps. When multiple paths are available, the cost is
calculated for a specified routing policy.

Since $\loc{I}$ is equivariant with $\homloc$ which is of the form
$\mu\times Id_{\tinc}$, the homomorphism $\mu:H\times \tinc\rightarrow
N$ can be used to trace the movement of $I$ between time steps.  The
homomorphism $\mu$ restricted to $e_{\tinc}$ reflects the layout of
the variable set at some time step.  Further, $\mu$ can be seen as a
function parameterized by time increments $\tinc$ that defines the
network group element that moves variables in the set $I$ through the
machine.  Therefore assigning communication costs to elements of $N$
(according to some routing policy) defines the cost of a schedule. We
simply add up the costs of network elements used across time steps
(this equivalence is established by the equivariance property).  When
$\mu$ depends only on $\tinc$, each element is moved between time
steps by the same $n\in N$ making it easy to calculate the
communication cost.

\subsection{Further details of the model}
\label{sec:extension}
A schedule $\sched$ requires a certain number of variables to be
present on a node at each time step. Those schedules that exceed the
\textbf{memory budget} are not considered. On the other hand, when the
memory available across the machines is larger than required for one
copy of the variable sets, replicating variables across nodes is often
necessary to design schedules that minimize communication
\cite{ITT04}. Therefore we extend the model to allow a constant number
of copies of each variable in an input set.  The symmetry of $c$
copies of input set iterated across time steps is
\begin{center}
\begin{tikzcd}[row sep=0em,column sep=9em]
{(\bigcup\limits_{i=1}^{c}I^{(i)})\times T}
 \arrow{r}[description]{(\sym{I})^c\times\tinc} & 
{(\bigcup\limits_{i=1}^{c}I^{(i)})\times T}.
\end{tikzcd}
\end{center}

Suppose a machine has nodes without a processor (call these $M$), we
model the machine as the action of $N\times \tinc$ on $(P\cup M)
\times T$. As before, the schedule $\sched$ is a
$(G,N\times\tinc)_{\homsched}$-equivariant map for some $G\leq
\sym{X}$.  The important difference is that the image of $\homsched$
must be a subset of $P\times T$.

All of the symmetries, group actions and equivariant maps are
summarized in the commutative diagram in Fig.~\ref{fig:maincd}.  We
omit the homomorphisms and depict the action of the entire group in
the text-overlaid arrow instead of just the subgroups relevant to the
chosen homomorphism.

We conclude this section with a \textbf{model for the fat-tree network}
 (Fig.~\ref{fig:comm-action} right).  A fat-tree of
size $n=2^k$ is a $k$-level balanced binary tree with $n$ processors
at the leaves.  To send a message from processor $P_i$ to $P_j$, the
fat-tree routes it up from $P_i$ to the least common ancestor of $P_i$
and $P_j$ in the tree and back down to $P_j$.  We model this
multi-level data movement as the action of a group that is the
\textit{wreath product} of the groups corresponding to per-level
movement.

\begin{definition}[wreath product]
The wreath product of two finite groups $A$ and $B$ for a specified
action of the group $B$ on the finite set $\Omega$ is denoted by
$A\wr_\Omega B$ and defined as the group action of $B$ on
$|\Omega|$-tuples of the elements of $A$: $A\wr_\Omega B:=K \rtimes
B$, where $K=\prod_{\omega\in\Omega} A_\omega$ and $\rtimes$
represents the action. When the index set $\Omega$ and the action of
$B$ on $\Omega$ are clear from the context, such as in the case of
$\Omega=[b]$ and $B=\perm{b}$, we simply write $A\wr B$.
\end{definition}

The elements of the group $\perm{a}\wr\perm{b} = (\perm{a})^b\rtimes
\perm{b}\leq\perm{ab}$ correspond to those permutations on the set
$[ab]$ that result from some permutation within each of the partitions
of the $ab$ elements into $\{0,..,a-1\},\{a,..,2a-1\},..,
\{(b-1)a,..,ab-1\}$, followed by some permutation over the partitions.

We model a fat-tree network of size $n=2^k$ with the group action of
the $k$-fold iterated wreath product
\[
\iterwr{2}{k}:= (((\perm{2}\wr\perm{2})\wr\perm{2})\wr\dots\perm{2}.
\]
The action of the elements of $\iterwr{2}{k}$ can be understood by
organizing the $n=2^k$ elements into a $k$-level balanced binary tree.
At each internal node in the binary tree, we can choose to either swap
the left and right subtree or leave them as is. Each set of choices
corresponds to a unique element in $\iterwr{2}{k}$. When considered as
the network group, the element $\sigma\in\iterwr{2}{k}$ that sends
index $i$ to the index $j$ ($i,j\in[n]$) sends the processor $P_i$ to
$P_j$: $\sigma\cdot P_i\mapsto P_j$.  Note that multiple permutations
send $P_i$ to $P_j$ for any choice of $i,j\in[n]$. They number
$2^{n-1}/n$ out of the total $2^{n-1}$ elements in this group.


\section{Solving Commutative Diagrams for Equivariant Maps}
\label{sec:comm-sol}

The diagram in Fig.~\ref{fig:maincd} is a set of constraints that the
equivariant maps (in red) must satisfy for some choice of
homomorphisms between subgroups of the groups whose actions are
depicted.  To find good schedules, we
\begin{closeitemize}
\item identify subgroups of the symmetry group $\sym{X}$ of $X$,
\item enumerate homomorphisms from this subgroup to other groups
  ($N,\tinc,\sym{I}$, etc.),
\item ``solve the commutative diagram'' for equivariant maps with
  respect to the homomorphism, and
\item find the map with the least time and communication cost,
  eliminating those that violate memory constraints.
\end{closeitemize}

By ``solving a commutative diagram'' such as Fig.~\ref{fig:GH-cd} for
$(G,H)_\rho$-invariant maps for a homomorphism $\rho:G\rightarrow H$
and a fixed action of group $G$ on $X$ and $H$ on $Y$, we mean
enumerating the maps $f:X\rightarrow Y$ that satisfy $f(g\cdot x) =
\rho(g)\cdot f(x)$ for all $x\in X, g\in G$. Solving a multi-level
commutative diagram such as in Fig.~\ref{fig:maincd} or
\ref{fig:simplecd} entails finding functions that make each individual
square in the diagram commute.

We will start with the solutions to the special case $H:=G$ and
$\rho:=Id$ as in Fig.~\ref{fig:G-cd}, which we simply refer to as
$G$-equivariant maps.
The general case will be similar.

\begin{figure}[!htb]
  \centering
\begin{tikzcd}[row sep=3em,column sep=3em]
    X \arrow[red]{d}{f}
      \arrow{r}[description,name = Gup]{G}&
    X \arrow[red]{d}{f} & 
    x \arrow[red,mapsto]{d}{f}
      \arrow[mapsto]{r}[below,name=gup]{g\cdot}& 
    g\cdot x \arrow[red,mapsto]{d}{f} \\
    Y \arrow{r}[description,name = Gdown]{G} & 
    Y & 
    y \arrow[mapsto]{r}[name=gdown]{g\cdot} &
    g\cdot y 
\arrow[blue,to path={(Gup) -- node[auto] {$Id$} (Gdown)}]{}
\arrow[blue,mapsto,to path={(gup) -- node[auto] {$Id$} (gdown)}]{}
\end{tikzcd}
  \caption{Commutative diagram depicting the $G$-equivariant map
    $f:X\rightarrow Y$. Given an action $g\cdot$ of elements $g\in G$,
    the function $f$ makes the diagram commute for all $g\in G$ and
    $x\in X, y\in Y$ s.t. $y=f(x)$.}  \label{fig:G-cd}
\end{figure}

For a subgroup $K\leq G$, the quotient $G/K$ is a set of \defn{cosets}
of the form $aK$ for some $a\in G$, where $aK$ is the set
$\{ak\ |\ k\in K\}$. All cosets of $K$ are of the same cardinality,
and there are $|G|/|K|$ distinct cosets.  We will see that
$G$-equivariant maps from the set $X$ to $Y$ are directly related to
``$G$-equivariant coset maps'' between cosets of subgroups of
$G$. Therefore, the crux of the problem involves maps between groups;
the sets $X$ and $Y$ themselves are somewhat secondary.  First, a few
facts about group action \cite{lang}.

\begin{proposition}
\label{prop:action}
An action of the group $G$ on the set $X$:
\begin{closeenum}
\item partitions $X$ into disjoint ``orbits'' $X_i$ that are
  closed and connected under the group action; $X=\sqcup_{i\in
  [q]}X_i$.
\item associates with each partition $X_i$ a ``stabilizer'' subgroup
  $K_i\leq G$ that ``stabilizes'' a point $x_i\in X_i$, i.e., $k\cdot
  x_i=x_i$ for all $k\in K_i$. Any point $x'\in X_i$ is equal to
  $g\cdot x_i$ for some $g\in G$ and is stabilized by the subgroup
  $gK_i g^{-1}$ which is isomorphic to $K_i$. If $G$ is abelian ('+'
  is commutative), $K_i$ stabilizes all points in $X_i$.
\item establishes an isomorphism $Gx\cong G/G_x$ between the orbit
  $Gx:=\{ g\cdot x\ |\ g\in G\}$ of $x\in X$ and the cosets of
  stabilizer $G_x$ of $x$. This allows us to associate with each $x\in
  X_i$ a unique coset $C_x$ from the cosets $G/K_i$ of the stabilizer
  group $K_i$ of some $x_i\in X_i$ such that $C_{x_i} = K_i$ and
  $C_{g\cdot x} = gC_x$ for all $g\in G$.
\end{closeenum}
\end{proposition}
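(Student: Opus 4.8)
The plan is to establish the three parts using the standard machinery of group actions; each follows directly from the group axioms and the definition of an action, so the proof is largely a matter of organizing these verifications in the right order.

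For part 1, I would introduce the relation on $X$ defined by $x \sim y$ if and only if $y = g\cdot x$ for some $g\in G$, and check that it is an equivalence relation. Reflexivity comes from the identity axiom $e\cdot x = x$; symmetry from $g\cdot x = y \Rightarrow g^{-1}\cdot y = x$; and transitivity from the composition axiom, $(h g)\cdot x = h\cdot(g\cdot x)$. The equivalence classes are precisely the orbits $X_i$, and equivalence classes always partition the underlying set, giving $X = \sqcup_{i\in[q]} X_i$. Each $X_i$ is closed under the action by construction, and in the associated graph any two points of $X_i$ are joined by the edge labelled with the connecting group element, which is what ``connected'' means here.

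For part 2, I would fix a representative $x_i\in X_i$ and set $K_i := \{\, g\in G : g\cdot x_i = x_i \,\}$. Verifying $K_i \leq G$ is routine: it contains $e$, is closed under composition, and is closed under inverses, each reducing to an action axiom. For an arbitrary point $x' = g\cdot x_i$, I would compute its stabilizer via the equivalences $h\cdot x' = x' \Leftrightarrow g^{-1}hg\cdot x_i = x_i \Leftrightarrow g^{-1}hg\in K_i \Leftrightarrow h\in gK_ig^{-1}$, so the stabilizer of $x'$ is the conjugate $gK_ig^{-1}$; since conjugation $k\mapsto gkg^{-1}$ is a group isomorphism, $gK_ig^{-1}\cong K_i$. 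The abelian case is then immediate, because conjugation is trivial when $G$ is commutative, so $gK_ig^{-1} = K_i$ stabilizes every point of $X_i$.

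For part 3, I would define $\phi: G/K_i \to Gx_i$ by $gK_i \mapsto g\cdot x_i$. Well-definedness and injectivity are a single computation, since $gK_i = g'K_i \Leftrightarrow g^{-1}g'\in K_i \Leftrightarrow g\cdot x_i = g'\cdot x_i$; surjectivity is the definition of the orbit; and equivariance (left multiplication on cosets corresponds to the action on the orbit) is immediate. Inverting $\phi$ assigns to each $x = g\cdot x_i$ the coset $C_x := gK_i$, whence $C_{x_i} = K_i$ (take $g = e$) and $C_{g\cdot x} = g\,C_x$ (if $x = h\cdot x_i$ then $g\cdot x = (gh)\cdot x_i$, so $C_{g\cdot x} = ghK_i = g\,C_x$). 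I expect the only genuine subtlety, and hence the ``main obstacle,'' to lie here: one must confirm that $C_x$ does not depend on the choice of $g$ representing $x$ (it is well-defined precisely as a coset, not as a group element) and that this assignment is the canonical $G$-equivariant bijection rather than an arbitrary labelling. Everything downstream in the paper rests on this equivariance $C_{g\cdot x} = g\,C_x$, so I would take care to make that step explicit even though the computation itself is short.
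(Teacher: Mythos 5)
Your proposal is correct and is exactly the canonical argument: orbits as equivalence classes, stabilizers of translates via conjugation, and the orbit--stabilizer bijection $gK_i \mapsto g\cdot x_i$ with its equivariance $C_{g\cdot x}=gC_x$. The paper gives no proof of its own for this proposition --- it states these as standard facts and cites a standard algebra reference --- so your write-up simply supplies the textbook proof the paper implicitly relies on, with nothing missing.
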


Therefore, any $G$-equivariant map $f:X\rightarrow Y$ can be split in
to $G$-equivariant components $\{f_i:X_i \rightarrow
Y_{\kappa(i)}\}_{i\in[q]}$ between connected orbits $\{X_i\}_{i\in
[q]}$ of $X$ and the connected orbits $\{Y_j\}_{j\in [r]}$ of $Y$ for
some function $\kappa:[q]\rightarrow[r]$.
Since a group action establishes a bijection between elements of the
orbit and the cosets of a stabilizer of a point in the orbit, each
component map $f_i$ induces a ``$G$-equivariant coset map'' $f'_i$
between the cosets of the stabilizer $L_i$ of $x_i\in X_i$ and the
cosets of the stabilizer $K_{\kappa(i)}$ of some $y_{\kappa(i)}\in
Y_{\kappa(i)}$; $f'_i: G/L_i \rightarrow G/K_{\kappa(i)}$.
The following lemma \cite{eq-maps} enumerates all $G$-equivariant maps
between cosets (proof in Section~\ref{sec:proofs}).


\begin{lemma}
Let $L,K$ be subgroups of $G$. Let $\alpha:G/L \rightarrow G/K$ be a
  map between cosets of $L$ and $K$ such that (i) $\alpha(glL)
  = \alpha(gL)$ for all $g\in G,l\in L$ and (ii)
  $\alpha(gL)=g\alpha(L)$ for all $g\in G$, i.e. $\alpha$ is
  $G$-equivariant. The function $\alpha$ exists $\iff$ there exists
  $a\in G$ such that $\alpha(L)=aK$ and $L^a:=a^{-1}La \subseteq
  K$. Further, if it exists, $\alpha$ is uniquely determined by $a$;
  for all cosets $gL$ of $L$, $g\in G$, $\alpha:=\hat{a}:gL\mapsto
  gaK$.
\label{lem:coset-maps}
\end{lemma}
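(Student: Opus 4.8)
The plan is to prove both directions of the equivalence by exploiting that the $G$-action on $G/L$ and on $G/K$ is left multiplication, so equivariance rigidly propagates the value of $\alpha$ on the single distinguished coset $L=eL$ to every coset. The entire content of the lemma then reduces to determining when the formula that equivariance forces is consistent, i.e. well-defined independent of the choice of representative.

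First I would treat the forward direction. Assuming $\alpha$ exists and satisfies (i) and (ii), I write $\alpha(L)=aK$ for some $a\in G$, since $\alpha(L)$ is by definition a coset of $K$; this is what fixes the element $a$. Applying equivariance (ii) to an arbitrary group element $g$ gives $\alpha(gL)=g\,\alpha(L)=gaK$, so $\alpha$ is forced to coincide with $\hat a:gL\mapsto gaK$ and is therefore uniquely determined by $a$. To extract the conjugation condition, I observe that $\hat a$ is a genuine map on cosets precisely when its value does not depend on the representative chosen for a coset: since $gL=g'L$ iff $g^{-1}g'\in L$, requiring $gaK=g'aK$ for every such pair is equivalent (specializing to $g=e$, $g'=l\in L$, the two representatives of the base coset that (i) equates) to requiring $laK=aK$, that is $a^{-1}la\in K$. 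Letting $l$ range over $L$ yields exactly $L^a=a^{-1}La\subseteq K$.

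For the converse I would start from any $a\in G$ with $L^a\subseteq K$, set $\alpha(L)=aK$, define $\hat a:gL\mapsto gaK$, and verify the two properties directly. Well-definedness is the only step with substance: if $g_1L=g_2L$ then $g_2^{-1}g_1=l\in L$, so $g_1aK=g_2(la)K=g_2a(a^{-1}la)K=g_2aK$, where the final equality holds because $a^{-1}la\in L^a\subseteq K$ is absorbed into the coset. Equivariance (ii) is then immediate from associativity of left multiplication, since $\hat a(g\cdot g'L)=(gg')aK=g\,(g'aK)=g\,\hat a(g'L)$.

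I expect the one genuinely substantive step to be the forward extraction of the inclusion $L^a\subseteq K$: the key realization is that the potential failure of $\hat a$ to be well-defined is governed \emph{exactly} by whether conjugating $L$ by $a$ lands inside $K$, and that testing consistency on the representatives of the base coset $L$ alone already captures the full condition, the general coset case following by left-translating and invoking (ii). Everything else — the uniqueness of $a$ and both verifications in the converse — is routine coset algebra that I would present compactly.
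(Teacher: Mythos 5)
Your proof is correct and takes essentially the same route as the paper's: both extract $a^{-1}la\in K$ by comparing $\alpha(lL)=l\alpha(L)=laK$ (from equivariance) with $\alpha(lL)=\alpha(L)=aK$ (from representative-independence), and both verify the converse by checking well-definedness of $\hat{a}:gL\mapsto gaK$ via the rewriting $glaK=ga(a^{-1}la)K=gaK$. Your explicit framing of the forward direction as a well-definedness question, tested on the base coset and propagated by left translation, is just a slightly more elaborated presentation of the paper's identical computation, and the uniqueness claim follows from equivariance in both.
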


We are now equipped to \textbf{enumerate $G$-equivariant maps}.  Since
the group action fixes the bijection between elements of an orbit and
the cosets of the stabilizer group of a point in the orbit, the choice
of the $G$-equivariant coset map between a pair of orbits defines a
corresponding $G$-equivariant map between the pair of orbits (see
Fig.~\ref{fig:eq-choice}).

\begin{proposition}
\label{prop:choice}
Fix an action of the group of $G$ on the set $X$ that splits $X$ into
connected orbits $\sqcup_{i\in [q]}X_i$ such that for all $i\in[q]$,
the subgroup $L_i\leq G$ stabilizes a point in the orbit $X_i$. Fix an
action of $G$ on the set $Y$ that splits $Y$ into connected orbits
$\sqcup_{j\in [r]}Y_j$ such that for all $j\in[r]$, the subgroup
$K_j\leq G$ stabilizes a point in the orbit $Y_j$.  Any function
$f:X\rightarrow Y$ that is equivariant under this pair of group
actions is uniquely determined by the following choices.
\begin{closeenum}
\item From each orbit $X_i$, \textbf{choose} a point $x_i$
 stabilized by $L_i$ and associate it with the coset
$L_i$. Similarly \textbf{choose} a point $y_j$ in each $Y_j$ and
associate it with coset $K_j$.
\item \textbf{Choose} a map $\kappa:[q]\rightarrow[r]$ such that
  $G$-equivariant maps exist between orbit $X_i$ and $Y_{\kappa(i)}$
  for all $i\in[q]$.
\item \textbf{Choose} for each orbit $X_i$ the image of the $G$-equivariant coset map
$f'_i:G/L_i \rightarrow G/K_{\kappa(i)}$ for the coset $L_i$. This
  fixes the image of $f'_i$ for all other cosets of $L_i$ by
  lemma~\ref{lem:coset-maps}.
\end{closeenum}
 To evaluate $f$ at an arbitrary $x\in X_i$, (i) find $g\in G$
s.t. $x_i=g\cdot x$ (it exists since the orbit is connected), (ii)
compute the coset $gf'_i(L_i)$, (iii) find the element $y$ in the
orbit $Y_{\kappa(i)}$ associated with the coset $gf'_i(L_i)$. The
element $y$ is $f(x)$.
\end{proposition}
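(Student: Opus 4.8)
The plan is to reduce the statement, via the two structural results already in hand, to a bookkeeping argument showing that the listed choices form a complete and non-redundant parameterization of the equivariant maps. The proof proceeds in three stages matching the three choices, and in each stage I would argue that the relevant data is both \emph{forced} by an equivariant $f$ (giving completeness: every equivariant map arises from some choices) and \emph{sufficient} to reconstruct one (giving uniqueness: each valid combination of choices yields at most one $f$).

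First I would show that equivariance forces any $f$ to respect the orbit decomposition, which is what produces the map $\kappa$ of choice 2. If $x,x'$ lie in the same orbit $X_i$, then $x'=g\cdot x$ for some $g\in G$, and equivariance gives $f(x')=f(g\cdot x)=g\cdot f(x)$, so $f(x)$ and $f(x')$ lie in a common orbit of $Y$. Hence $f(X_i)$ is contained in a single orbit $Y_{\kappa(i)}$, defining $\kappa:[q]\to[r]$, and $f$ splits as the disjoint union of its restrictions $f_i:X_i\to Y_{\kappa(i)}$, each itself $G$-equivariant. Since $X=\sqcup_i X_i$, specifying $f$ is the same as specifying the tuple $(f_i)_{i\in[q]}$, so it suffices to parameterize each equivariant component separately.

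Next I would translate each component to the coset picture using Proposition~\ref{prop:action}. Choosing base points $x_i$ and $y_{\kappa(i)}$ (choice 1) fixes the bijections $X_i\cong G/L_i$ and $Y_{\kappa(i)}\cong G/K_{\kappa(i)}$ sending $g\cdot x_i\mapsto gL_i$ (well defined because the stabilizer ambiguity in $g$ is exactly $L_i$, so the coset $gL_i$ is unambiguous), and similarly for $Y$. Under these bijections $f_i$ becomes a map $f'_i:G/L_i\to G/K_{\kappa(i)}$, and I would check that $G$-equivariance of $f_i$ translates precisely into conditions (i) and (ii) of Lemma~\ref{lem:coset-maps} for $f'_i$: condition (i) is the well-definedness of $f'_i$ on cosets, which is automatic from the bijection, while condition (ii) is the direct unwinding of $f_i(g\cdot x_i)=g\cdot f_i(x_i)$ through the coset correspondence $C_{g\cdot y}=gC_y$ of Proposition~\ref{prop:action}(3). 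Lemma~\ref{lem:coset-maps} then says each such coset map is determined by its single value $f'_i(L_i)=a_iK_{\kappa(i)}$ and exists if and only if $a_i^{-1}L_ia_i\subseteq K_{\kappa(i)}$, which is exactly the feasibility clause attached to choice 2 and the content of choice 3.

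Finally I would reassemble and verify the evaluation recipe. Given valid choices, Lemma~\ref{lem:coset-maps} produces each $f'_i$ uniquely as $gL_i\mapsto ga_iK_{\kappa(i)}$; transporting back through the orbit-coset bijections recovers each $f_i$, and the $f_i$ assemble to a single $f$ on $X=\sqcup_i X_i$, establishing that the choices determine $f$ uniquely. Reading the construction off gives the stated procedure: write $x=g\cdot x_i$, so $x$ corresponds to the coset $gL_i$, send it to $gf'_i(L_i)=ga_iK_{\kappa(i)}$, and return the point of $Y_{\kappa(i)}$ attached to that coset. The one point demanding care---and the main, if modest, obstacle---is well-definedness of the output under the non-unique choice of $g$: replacing $g$ by $gl$ with $l\in L_i$ leaves the coset $gL_i$ unchanged but a priori changes the image to $gla_iK_{\kappa(i)}$, so I must confirm $la_iK_{\kappa(i)}=a_iK_{\kappa(i)}$, i.e. $a_i^{-1}la_i\in K_{\kappa(i)}$ for all $l\in L_i$. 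This is precisely the inclusion $a_i^{-1}L_ia_i\subseteq K_{\kappa(i)}$ guaranteed by Lemma~\ref{lem:coset-maps}, so no genuinely new work is required; the substance of the proposition lives in Proposition~\ref{prop:action} and Lemma~\ref{lem:coset-maps}, and what remains is to verify that the correspondence between equivariant maps and equivariant coset maps is faithful in both directions.
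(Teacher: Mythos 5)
Your proposal is correct and takes essentially the same route as the paper, which offers no separate formal proof of this proposition but justifies it through exactly the argument you make explicit: split $f$ into orbit components via equivariance, transport each component through the orbit--coset bijection of Proposition~\ref{prop:action}, and invoke Lemma~\ref{lem:coset-maps} to parameterize each coset map by the single value $f'_i(L_i)=a_iK_{\kappa(i)}$ subject to $a_i^{-1}L_ia_i\subseteq K_{\kappa(i)}$. Your well-definedness check under replacing $g$ by $gl$, $l\in L_i$, is the right point to isolate, and your convention $x=g\cdot x_i$ is in fact the consistent reading of the evaluation recipe (the statement's step (i), which finds $g$ with $x_i=g\cdot x$, would otherwise require $g^{-1}f'_i(L_i)$ in step (ii)).
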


\begin{figure}[!htb]
\begin{center}
  \begin{tikzpicture}
\coordinate (c0) at (0,0);
\coordinate (c1) at (4,0);

\node at ([xshift=-4.5em]c0) {orbit $X_i$};
\node at ([xshift=4.5em]c1) {orbit $Y_i$};

    \draw(c0) circle (2.5em);
    \draw(c1) circle (2.5em);

    \path (c0) ++(-70:2.5em) node (x) {};

    \path (c0) 
          ++(50:2.5em) node (p0) {}
          ++(20:1em) node (s0) [blue]{$L_i$};

    \path (p0) ++(200:1em) node
          (x0) {$x_i$};
    \path (x) ++(150:0.8em) node
          (xlab) {$x$};

    \path (c1)
          ++(-130:2.5em) node (p1) {}
          ++(-150:1.5em) node (s1) [blue] {$f'_i(L_i)$};

    \path (c1) ++(100:2.5em) node (y) {}
          ++(-180:2.5em) node (s2) [blue] {$gf'_i(L_i)$};

    \path (y) ++(-40:0.7em) node
          (ylab) {$y$};

    \draw[->,blue] (p0) -- node[above] {$f_i'$}(p1);
    \draw[->] (x) -- node[left] {$g\cdot$}(p0);
    \draw[->,blue] (p1) -- node[left] {$g$}(y);
    \draw[->,red] (x) -- node[near start,above] {$f_i$}(y);

  \end{tikzpicture}
  \caption{Sets and their element in black. Equivariant map in
    red. Cosets and coset maps in blue.}
  \label{fig:eq-choice}
\end{center}
\end{figure}
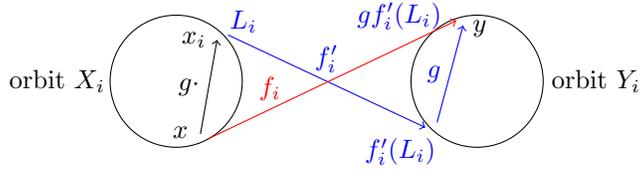

It is worth considering the special case where $G$ is abelian and
acts \defn{transitively} on $X$ (and $Y$), i.e., the action of $G$
retains $X$ (and $Y$) as one orbit.  Since $G$ is abelian, the same
subgroup stabilizes all points in the orbit.  Let $L$ and $K$ be the
stabilizers of $X$ and $Y$. The condition in
lemma~\ref{lem:coset-maps} for the existence of $G$-equivariant maps
simplifies to $L\leq K$.  The image of a point $x_0 \in X$ in $Y$
fixes the $G$-equivariant map $f:X\rightarrow Y$.  Since the points in
$Y$ are in bijection with $G/K$, such maps can be parameterized by the
cosets $G/K$.

We now consider the more general case of $(G,H)_\rho$-equivariant maps
for the homomorphism $\rho:G\rightarrow H$. The equivalent of
lemma~\ref{lem:coset-maps} for $(G,H)_\rho$-equivariant coset maps is
lemma~\ref{lem:GH-coset-map} whose proof is entirely similar to
lemma~\ref{lem:coset-maps}.

\begin{lemma}
  Let $L$ and $K$ be subgroups of $G$ and $H$ respectively. Let
  $\rho:G\rightarrow H$ be a homomorphism. Let $\alpha:G/L \rightarrow
  H/K$ be a map between cosets of $L$ and $K$ such that (i)
  $\alpha(glL) = \alpha(gL)$ for all $g\in G,l\in L$ and (ii)
  $\alpha(gL)=\rho(g)\alpha(L)$ for all $g\in G$, i.e. $\alpha$ is
  $(G,H)_\rho$-equivariant. The function $\alpha$ exists $\iff$ there
  exists $a\in H$ such that $\alpha(L)=aK$ and
  $L^a_{\rho}:=a^{-1}\rho(L)a \subseteq K$. Further, if it exists,
  $\alpha$ is uniquely determined by $a$; for all cosets $gL$ of $L$,
  $g\in G$, $\alpha:=\hat{a}:gL\mapsto \rho(g)aK$.
\label{lem:GH-coset-map}
\end{lemma}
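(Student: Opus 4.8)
The plan is to follow the structure of Lemma~\ref{lem:coset-maps}, tracking the single place where the homomorphism $\rho$ changes the argument. The biconditional splits into two directions, and the uniqueness assertion will drop out of the forward direction.

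For the forward direction ($\Rightarrow$), I assume an $\alpha$ satisfying (i) and (ii) exists. Every element of $H/K$ is a coset $aK$ for some $a\in H$, so I name $\alpha(L)=aK$. Equivariance (ii) then gives $\alpha(gL)=\rho(g)\alpha(L)=\rho(g)aK$ for every $g\in G$, which both exhibits the claimed formula $\alpha=\hat a$ and shows $\alpha$ is determined entirely by the single coset $aK$, i.e. by $a$ up to right multiplication by $K$ --- this is the uniqueness statement. To recover the constraint $L^a_\rho\subseteq K$, I specialize (ii) to $g=l\in L$: on one hand $\alpha(lL)=\rho(l)aK$, while on the other $lL=L$ forces $\alpha(lL)=\alpha(L)=aK$ because $\alpha$ is an honest function on cosets (this is where (i) does its work). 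Equating gives $\rho(l)aK=aK$, i.e. $a^{-1}\rho(l)a\in K$ for all $l\in L$, which is precisely $a^{-1}\rho(L)a=L^a_\rho\subseteq K$.

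For the converse ($\Leftarrow$), I take $a\in H$ with $L^a_\rho\subseteq K$ and define $\alpha(gL):=\rho(g)aK$, then check this is a bona fide $(G,H)_\rho$-equivariant map. Equivariance (ii) is immediate: $\alpha(L)=aK$ and $\rho$ is a homomorphism, so $\alpha(gL)=\rho(g)aK=\rho(g)\alpha(L)$. The one substantive point is well-definedness (condition (i)): if $gL=g'L$, write $g'=gl$ with $l\in L$; after cancelling $\rho(g)$, the requirement $\rho(g)aK=\rho(gl)aK$ reduces to $aK=\rho(l)aK$, equivalently $a^{-1}\rho(l)a\in K$, which holds by the hypothesis $L^a_\rho\subseteq K$. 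So $\alpha$ is well defined and equivariant.

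The main point to get right --- and the only genuine departure from Lemma~\ref{lem:coset-maps} --- is identifying which subgroup must be absorbed into a conjugate of $K$. In the $(G,G)_{Id}$ case one conjugates $L$ itself by an element $a\in G$ and requires $a^{-1}La\subseteq K$; here $a$ lives in the \emph{target} group $H$, and it is the image $\rho(L)\leq H$, not $L$, that must satisfy $a^{-1}\rho(L)a\subseteq K$. Once this substitution is made consistently, the homomorphism identity $\rho(gl)=\rho(g)\rho(l)$ is the only extra fact needed, and every step of the $Id$ proof carries over. I expect no obstacle beyond this bookkeeping; in particular there is no existence subtlety, since the displayed formula both constructs $\alpha$ and, run in reverse, forces the condition on $a$.
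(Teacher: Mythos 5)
Your proof is correct and follows exactly the route the paper intends: the paper proves Lemma~\ref{lem:coset-maps} by setting $\alpha(L)=aK$, extracting $a^{-1}la\in K$ from well-definedness at cosets $lL=L$, and verifying the converse via the explicit map $\hat{a}$, then states that Lemma~\ref{lem:GH-coset-map} follows by an ``entirely similar'' argument with $\rho(L)$ in place of $L$ --- which is precisely the substitution you carry out. Your write-up is in fact slightly more explicit than the paper's (it spells out uniqueness and the equivariance check rather than asserting them), but it is the same proof.
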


As before, this lemma directly results in the characterization of the
maps $f:X\rightarrow Y$ that are equivariant with respect to the
transitive action of $G$ on $X$ and $H$ on $Y$.  Fix the transitive
action of $G$ on the set $X$ and suppose $L$ stabilizes some point in
$X$. Similarly, fix the action of $H$ on $Y$ and suppose $K$
stabilizes some point in $Y$.  Associate with each point in $X$ a
coset of $L$. Similar for $Y$.  Let $x_0\in X$ be a point stabilized
by the subgroup $L\leq G$.  Choosing the image of $L$ (say $aK$)
determines the image of $x_0$ ($f(x_0)$ is the point in $Y$
associated with coset $aK$) and fixes the image of every other
point. Therefore, again the $(G,H)_\rho$-equivariant maps,
if they exist, are in bijection with the cosets of $H/K$.

These observations can be extended to the case of non-transitive
actions with a treatment similar to that in the case of
$G$-equivariant maps. We simply construct functions by putting
together equivariant maps between orbits.

Finally, we make the following observation about the size of the
preimage of an equivariant map, which we use to find the memory
requirements of data placement function for a variable set $I$,
$\loc{I}$, which is equivariant with $\homloc$.
\begin{proposition}
Let $f:X\rightarrow Y$ be an $(G,H)_\rho$-equivariant map for
transitive actions of finite groups $G$ and $H$ with stabiliizers $L$
and $K$ respectively. The number of elements of $x\in X$ that map to
any given $y\in Y$ is $|G/L|/|H/K|$.
\label{prop:preimage}
\end{proposition}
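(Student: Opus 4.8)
The plan is to read the stated count as the size of a fiber $f^{-1}(y)$ and to prove the result in two moves: first that all nonempty fibers have the same cardinality, and then that $f$ is onto, so that this common cardinality is forced to be $|X|/|Y|$. The first move is purely formal and uses only equivariance together with the fact that every $g\in G$ acts on $X$ as a bijection. Concretely, suppose $y,y'$ both lie in the image of $f$. Since $G$ acts transitively on $X$, the image $f(X)=\{\rho(g)\cdot f(x_0) : g\in G\}=\rho(G)\cdot f(x_0)$ is a single orbit of the subgroup $\rho(G)\le H$, so $y$ and $y'$ lie in one $\rho(G)$-orbit and there is some $g\in G$ with $\rho(g)\cdot y=y'$. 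The bijection $x\mapsto g\cdot x$ of $X$ then carries $f^{-1}(y)$ onto $f^{-1}(y')$: if $f(x)=y$ then $f(g\cdot x)=\rho(g)\cdot f(x)=y'$ by equivariance, and $x\mapsto g^{-1}\cdot x$ is the inverse map. Hence all nonempty fibers are equinumerous.

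For the second move I would use the orbit--stabilizer identification of Proposition~\ref{prop:action}: transitivity gives $|X|=|G/L|$ and $|Y|=|H/K|$. If $f$ is surjective, then summing $|X|=\sum_{y\in Y}|f^{-1}(y)|$ over the $|Y|$ equal terms yields $|f^{-1}(y)|=|X|/|Y|=|G/L|/|H/K|$ for every $y$, which is the claim. As an alternative that also pins down the constant directly, one can compute a single fiber by hand: choosing the base points so that $f(x_0)=y_0$ with $L$ and $K$ their stabilizers, an element $x=g\cdot x_0$ lies in $f^{-1}(y_0)$ exactly when $\rho(g)\cdot y_0=y_0$, i.e. $\rho(g)\in K$; this identifies $f^{-1}(y_0)$ with the $L$-cosets inside $\rho^{-1}(K)$ (note $L\subseteq \rho^{-1}(K)$ since the existence condition of Lemma~\ref{lem:GH-coset-map} forces $\rho(L)\subseteq K$ for this base point), giving $|f^{-1}(y_0)|=[\rho^{-1}(K):L]$.

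The hard part is the surjectivity hypothesis hidden in the phrase ``any given $y\in Y$'': as computed above the image of $f$ is the single $\rho(G)$-orbit $\rho(G)\cdot f(x_0)$, which fills all of $Y$ if and only if $\rho(G)K=H$, and when it does not, fibers over points outside the image are empty and the stated ratio fails. I would therefore either add surjectivity (equivalently $\rho(G)K=H$) as an explicit hypothesis---this is exactly the regime relevant to data placement, where $\loc{I}$ is expected to cover every processor---or restrict the claim to $y$ in the image. Under that hypothesis the bookkeeping closes: one checks $[\rho^{-1}(K):L]=|G/L|/|H/K|$ by writing $|\rho^{-1}(K)|=|K\cap\rho(G)|\,|\ker\rho|$ and using $\rho(G)K=H$, so the two routes agree. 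The only genuine content beyond routine counting is thus the equivariant bijection between fibers and the recognition that surjectivity is what converts ``common fiber size'' into the precise ratio $|G/L|/|H/K|$.
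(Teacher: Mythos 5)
Your proposal is correct, and it is worth noting that the paper itself offers no proof of Proposition~\ref{prop:preimage} at all: Section~\ref{sec:proofs} proves only Lemmas~\ref{lem:coset-maps}, \ref{lem:imp-hom}, \ref{lem:pp-hom} and \ref{lem:tp-hom}, so your argument fills a genuine gap rather than paralleling an existing one. Both of your routes check out. The formal one --- equivariance plus the $G$-action by bijections makes all fibers over the image $f(X)=\rho(G)\cdot f(x_0)$ equinumerous, and surjectivity then forces the common size $|X|/|Y|=|G/L|/|H/K|$ via orbit--stabilizer --- is the cleanest. The direct computation is also sound: with $f(x_0)=y_0$, the fiber $f^{-1}(y_0)$ is in bijection with the cosets $gL$ satisfying $\rho(g)\in K$, i.e.\ with $\rho^{-1}(K)/L$ (well-defined since $\rho(L)\subseteq K$, which follows immediately from $\rho(L)$ stabilizing $y_0$, not only from the existence condition of Lemma~\ref{lem:GH-coset-map}); your identity $[\rho^{-1}(K):L]=|G/L|/|H/K|$ under $\rho(G)K=H$ is verified by $|\rho^{-1}(K)|=|\ker\rho|\,|\rho(G)\cap K|$ and $|H|=|\rho(G)|\,|K|/|\rho(G)\cap K|$.

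Your most valuable contribution is the observation that the proposition as stated is false without a surjectivity hypothesis: if $\rho(G)K\neq H$, fibers outside the single $\rho(G)$-orbit are empty, the remaining fibers are strictly larger than $|X|/|Y|$, and the claimed ratio need not even be an integer (take $G$ trivial, $|X|=1$, $H=\intp{2}$ acting on a two-point $Y$, where the ratio is $1/2$). This caveat is consistent with how the paper actually uses the proposition --- e.g.\ it demands that the image of $\homloc$ be large enough for $\loc{A}$ to be an embedding --- so stating surjectivity (equivalently $\rho(G)K=H$) explicitly, as you do, is a refinement the paper's bare statement needs.
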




\section{Example: Matrix Multiplication}
\label{sec:matmul}
Using classical matrix multiplication as an example, we will
illustrate the solution of equations in Fig.~\ref{fig:maincd} for
different topologies, recovering well-known variants of matrix
multiplication.  We pick this as our example because its symmetry
group is isomorphic to the direct product of permutation groups:
$\sym{X}\cong\perm{l}\times\perm{m} \times\perm{n}$ for $l\times
m\times n$-size multiplication.
\footnote{The description of matrix multiplication as a composition of
  two linear maps between vector spaces has a substantially larger
  continuous symmetry group.  Faster (sub-cubic) algorithms use this
  fact. We do not consider these in this paper.}  The maximal
subgroups of the permutation group are known due to the O'Nan-Scott
theorem \cite{ONanScott}.

Among these, the subgroups $\perm{a}\wr\perm{b}$ for $ab=n$ are of
interest for the choice of network topologies considered in this
paper.  Further, they are transitive; for every $i,j\in[n]$, there is
an element (permutation) in the subgroup whose action takes $i$ to
$j$. Therefore when $n=2^k$, the subgroup $\iterwr{2}{k}\leq\perm{n}$
is also a transitive subgroup.

Before considering specific machine topologies, we will state a few
simple facts about homomorphisms $\homsched: G\rightarrow \intp{q}$
from some subgroup $G\leq\perm{q}$ to $\intp{q}$, when $q$ is a prime.
A permutation $\sigma\in\perm{q}$ can be seen as a collection of
disjoint permutations over partitions of $[q]$ (referred to as the
cycle decomposition). We refer to those permutations that can be
decomposed into permutations over non-trivial partitions of $[q]$ as
imprimitive, and primitive otherwise.

\begin{lemma}
Let $q$ be a prime. If $\homsched: G\rightarrow \intp{q}$ is a
homomorphism for some subgroup $G\leq\perm{q}$, all imprimitive
permutations in $G$ are in $\ker \rho:= \{g\in G\ |\ \rho(g)=e_{\intp{q}}\}$.
\label{lem:imp-hom}
\end{lemma}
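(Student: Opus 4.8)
The plan is to exploit the prime order of $\intp{q}$ together with the elementary fact that a homomorphism sends the order of each element to a divisor of both the source order and the target order. Concretely, I would show that an imprimitive permutation has order coprime to $q$, and that any element whose order is coprime to $q$ is forced into $\ker \homsched$ because the target group $\intp{q}$ has prime order.

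First I would record the order relation. For any $\sigma\in G$, since $\homsched$ is a homomorphism, $\homsched(\sigma)^{\mathrm{ord}(\sigma)} = \homsched(\sigma^{\mathrm{ord}(\sigma)}) = e_{\intp{q}}$, so the order of $\homsched(\sigma)$ divides $\mathrm{ord}(\sigma)$. By Lagrange it also divides $|\intp{q}| = q$. Because $q$ is prime, $\intp{q}$ is cyclic with no nontrivial proper subgroups, so $\mathrm{ord}(\homsched(\sigma))$ is either $1$ or $q$, and it equals $q$ only when $q \mid \mathrm{ord}(\sigma)$. Hence it suffices to prove that every imprimitive $\sigma\in G$ has order not divisible by $q$.

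The crux — and the step I expect to require the most care — is the structural characterization of which permutations in $\perm{q}$ have order divisible by $q$. I would argue via the cycle decomposition: $\mathrm{ord}(\sigma)$ is the least common multiple of its cycle lengths, which are positive integers summing to $q$. For $q$ to divide this least common multiple, some cycle length must be a multiple of $q$; but each cycle length is at most $q$, and since $q$ is prime the only such multiple is $q$ itself. A cycle of length $q$ already exhausts all of $[q]$, so $\sigma$ must be a single $q$-cycle — precisely a \emph{primitive} permutation in the terminology of the lemma, since its orbit decomposition is the trivial one-block partition.

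Contrapositively, an imprimitive $\sigma$, whose cycle decomposition induces a genuinely nontrivial partition of $[q]$ into two or more orbits, is not a $q$-cycle; therefore $q \nmid \mathrm{ord}(\sigma)$, which forces $\mathrm{ord}(\homsched(\sigma)) = 1$ and hence $\sigma\in\ker\homsched$. This establishes the claim. The only subtlety worth flagging is pinning down the primitive/imprimitive dichotomy at the boundary cases (the identity and the $q$-cycles), but since the identity lies in $\ker\homsched$ trivially, the conclusion is insensitive to that convention.
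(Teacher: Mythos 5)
Your proof is correct and takes essentially the same route as the paper's: the paper annihilates an imprimitive $\sigma$ with the exponent $k_1k_2\cdots k_c$ (the product of its cycle lengths, each $<q$) and derives a contradiction from $q\nmid k_1\cdots k_c$ using primality, while you use the equivalent annihilating exponent $\mathrm{ord}(\sigma)=\mathrm{lcm}(k_1,\dots,k_c)$ together with Lagrange's theorem on $\intp{q}$ — in both cases the point is that an imprimitive permutation is killed by an exponent coprime to $q$, forcing its image in the cyclic group of prime order $q$ to be trivial. Your contrapositive packaging and the explicit remark that only $q$-cycles have order divisible by $q$ are presentational differences, not a different argument.
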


See Section~\ref{sec:proofs} for proof of lemmas stated here.

Not all subgroups of $\perm{q}$ allow a non-trivial homomorphism to
$\intp{q}$. Specifically, two different kinds of primitive
permutations can not map to non-identity elements. The domains of
homomorphisms to $\intp{q}$ can contain only one kind of primitive
permutation -- permutations resulting from repeated composition of a
primitive permutation.

\begin{lemma}
Let $\rho: G\rightarrow \intp{q}$ be a non-trivial homomorphism for
some subgroup $G\leq\perm{q}$.  Let $\sigma$ be a primitive
permutation in $G$.  If $\sigma\notin\ker\rho$ and $q$ is a prime,
then all the elements in $G$ are of the form $\sigma^k$ for some
$k\in\ints$.
\label{lem:pp-hom}
\end{lemma}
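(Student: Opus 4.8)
The plan is to show that the hypotheses force $G=\gen{\sigma}$, a cyclic group of order $q$, after which every element of $G$ is visibly some $\sigma^k$. First I would pin down the structure forced by $\sigma\notin\ker\rho$. Since $\intp{q}$ has prime order its only subgroups are the trivial one and itself, so the nontrivial homomorphism $\rho$ is surjective and $[G:\ker\rho]=q$. Because $\sigma\notin\ker\rho$ we have $\rho(\sigma)\neq e_{\intp{q}}$, so $\rho(\sigma)$ has order $q$; as $\mathrm{ord}(\rho(\sigma))$ divides $\mathrm{ord}(\sigma)$, we get $q\mid\mathrm{ord}(\sigma)$. The order of $\sigma$ is the least common multiple of its cycle lengths, each at most $q$, and primality of $q$ forces a single cycle of length $q$, i.e.\ $\sigma$ is a $q$-cycle (this is the order argument underlying Lemma~\ref{lem:imp-hom}). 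Hence $\gen{\sigma}$ has order $q$ and already acts transitively on $[q]$, so $G$ is transitive.

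Next I would argue that $\ker\rho$ is trivial. Since $q$ is prime, $q\mid q!$ but $q^2\nmid q!=|\perm{q}|$, so the $q$-part of $|G|$ divides $q$; as $\sigma$ has order $q$ it equals $q$, whence $|\ker\rho|=|G|/q$ is coprime to $q$. Now $\ker\rho\trianglelefteq G$, and because it is normal, $G$ permutes the orbits of $\ker\rho$ on $[q]$; transitivity of $G$ makes that induced action transitive, so all $\ker\rho$-orbits share a common size $d$ with $d\mid q$. Primality leaves $d\in\{1,q\}$: if $d=q$ then $\ker\rho$ is transitive and $q\mid|\ker\rho|$, contradicting coprimality; so $d=1$, meaning $\ker\rho$ fixes every point, and faithfulness of the $\perm{q}$-action gives $\ker\rho=\{e\}$.

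Finally $|G|=q\cdot|\ker\rho|=q$, and since $\sigma\in G$ has order $q$ we conclude $G=\gen{\sigma}$, so every element of $G$ equals $\sigma^k$ for some $k\in\ints$. The main obstacle is the middle step: converting the purely numerical fact that $|\ker\rho|$ is coprime to $q$ into $\ker\rho=\{e\}$. This is exactly where transitivity is indispensable---one leverages that a normal subgroup of a transitive group of prime degree must be either transitive or act trivially, and then eliminates the transitive case by the coprimality of $|\ker\rho|$ and $q$. The remaining steps are bookkeeping with element orders and the subgroup structure of $\intp{q}$.
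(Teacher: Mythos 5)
Your proof is correct, and it takes a genuinely different route from the paper's. The paper argues element-by-element via cycle structure: it invokes Lemma~\ref{lem:imp-hom} to place every imprimitive element of $G$ in $\ker\rho$, then multiplies candidate elements of $G$ against $\sigma$ (and against $\sigma^{-1}$) and uses invertibility of the resulting exponents modulo the prime $q$ to force $\rho(\sigma)=\rho(\sigma^{-1})$, a contradiction; it never computes $|G|$, never uses normality of $\ker\rho$, and never uses transitivity. You instead argue structurally: surjectivity of $\rho$ onto the prime-order group $\intp{q}$ gives $[G:\ker\rho]=q$; Lagrange plus $q^2\nmid q!$ gives $\gcd(|\ker\rho|,q)=1$; and the standard fact that the orbits of a normal subgroup of a transitive group all have a common size (here dividing the prime $q$) forces $\ker\rho$ to fix every point, hence $\ker\rho=\{e\}$, $|G|=q$, and $G=\gen{\sigma}$. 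Your route buys two things. First, it proves strictly more: the kernel is trivial and $|G|$ is exactly $q$, not merely that every element is a power of $\sigma$. Second, it is more robust: the paper's proof asserts without justification that $\sigma\cdot\sigma'$ is imprimitive whenever $\sigma'$ is, and that $\sigma\cdot\sigma''$ is imprimitive whenever the primitive $\sigma''$ is not a power of $\sigma$. Neither holds for arbitrary permutations --- for $q=5$, take $\sigma=(1\,2\,3\,4\,5)$ and $\sigma''=(1\,4\,2\,3\,5)\notin\gen{\sigma}$, whose product is again a $5$-cycle under either composition convention --- so the paper's element-wise argument leans on unproved imprimitivity claims that your normal-orbit argument sidesteps entirely. (What the paper's approach buys, when patched, is a self-contained computation in the spirit of Lemma~\ref{lem:imp-hom}, needing no facts about normal subgroups or orbit counting.) One small redundancy on your side: deriving that $\sigma$ is a $q$-cycle is unnecessary, since the paper's definition of ``primitive'' already says exactly this; your derivation does show, though, that this hypothesis is implied by $\sigma\notin\ker\rho$ alone.
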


Therefore, when $q$ is a prime, all the non-trivial homomorphisms
$\rho:G\rightarrow\intp{q}$ from subgroups $G\leq\perm{q}$ are
parameterized by a primitive permutation $\sigma$ and its image in
$\intp{q}$. The subgroup $G$ is generated by $\sigma$ and
$\rho:\sigma^k\mapsto \rho(\sigma)^k$.  Among primitive permutations,
cyclic permutations, i.e., permutations that cyclically shift elements
are very useful. We will denote the one-step shift by $\shift:i\mapsto
i\add{q}1$.

\begin{lemma}
  Let $\rho:G\rightarrow\intp{t}$ be a non-trivial homomorphism for
  some $G\leq\perm{q}$. If $q$ is a prime, then $q$ divides $t$.
\label{lem:tp-hom}
\end{lemma}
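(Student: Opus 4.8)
The plan is to reduce the divisibility $q \mid t$ to a statement about the order of one carefully chosen element. Since $\intp{t}$ is cyclic of order $t$, the order of any of its elements divides $t$; hence it suffices to produce a single $g \in G$ whose image $\rho(g)$ has order exactly $q$ in $\intp{t}$. The natural witness is a primitive permutation: because $q$ is prime, a primitive (cyclic) permutation $\sigma \in \perm{q}$ --- for instance the shift $\shift$ --- is a $q$-cycle and therefore has order exactly $q$ in $\perm{q}$.

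Given such a $\sigma \in G$, the order computation is immediate. From $\sigma^q = e_G$ we get $\rho(\sigma)^q = \rho(\sigma^q) = e_{\intp{t}}$, so the order of $\rho(\sigma)$ divides $q$. As $q$ is prime, that order is either $1$ or $q$. If $\rho(\sigma) \neq e_{\intp{t}}$, the order is forced to be $q$, and since every element order in $\intp{t}$ divides $t$, we conclude $q \mid t$. Thus the whole argument rests on finding a primitive permutation in $G$ that escapes $\ker\rho$.

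I expect this last point to be the main obstacle. Non-triviality of $\rho$ by itself does not guarantee that the ``witness'' sits on a primitive permutation --- a homomorphism could instead be supported on imprimitive permutations --- so I would supply the missing ingredient from the structural picture developed just before the lemma, in which $G$ is generated by a primitive permutation $\sigma$ with $\rho$ determined by $\sigma^k \mapsto \rho(\sigma)^k$. In that regime, if every primitive permutation lay in $\ker\rho$ then the generator $\sigma$ would too, hence all of $G$, contradicting non-triviality; so some primitive $\sigma \notin \ker\rho$ must exist, and the order computation above then yields $q \mid t$. The point requiring care is the interaction with imprimitive permutations: each such permutation has all cycle lengths strictly below $q$, so (as $q$ is prime) its order --- and therefore the order of its image --- is coprime to $q$, echoing Lemma~\ref{lem:imp-hom}. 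Consequently no imprimitive permutation can contribute an element of order $q$ to $\rho(G)$, which is exactly why the required factor of $q$ must be pinned on the primitive generator.
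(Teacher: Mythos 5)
Your proof is correct and takes essentially the same route as the paper, whose entire proof reads ``The group $G$ is cyclic by lemma~\ref{lem:pp-hom}. The proof follows directly from this fact'' --- i.e., it too rests on a primitive generator $\sigma$ of order $q$ whose non-identity image in $\intp{t}$ must have order exactly $q$, which divides $t$ by Lagrange. Your closing caveat --- that non-triviality of $\rho$ alone does not produce a primitive witness outside $\ker\rho$, so the cyclic structure of $G$ must be imported from Lemma~\ref{lem:pp-hom} and the surrounding discussion --- is precisely the tacit dependence in the paper's one-line proof, which you have simply made explicit.
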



\subsection{2D-Torus: variants of Cannon's algorithm}
Let us start with the case $l=m=n=q$ on a $q\times q$ sized
2D-torus. Suppose that each node has three units of memory, one for
each variable from the input and output sets $A,B,C$. This limits each
node to holding one copy of each variable across the machine, and each
processor to executing one instruction per time step. So we consider
steps of one clock cycle duration with $\tinc=\intp{t}$ where $t$ is
some multiple of $q$ (lemma~\ref{lem:tp-hom}).  We will derive a
family of schedules that are cost-efficient for all values of the
parameter $q$, including prime valued $q$. We will assume $q$ is a
prime in the rest of this subsection for ease of analysis.  However,
the schedules we derive will also work for non-prime $q$.

The only subgroups of $\perm{q}$ with non-trivial homomorphisms to
$\intp{q}$ when $q$ is prime are those generated by a primitive
permutation (lemma~\ref{lem:pp-hom}). 
\footnote{It so happens here that transitive group actions suffice to
  give cost-efficient algorithms. We could also consider schedules
  arising from non-transitive group actions in general.}  Among these,
we use the transitive subgroup $\Shift{q}$ generated by the one-step
cyclic shift $\shift$ so that we choose $\Shift{q}^3\leq\sym{X}$. Note
that $\Shift{q}\cong\intp{q}$.
\footnote{subgroups generated by other primitive permutations also
  yield schedules, but are costlier in terms of communication.}  Since
the map $\loc{A}$ is equivariant with $\homloc$, the image of
$\homloc$ should be at least $q^2t$ for $\loc{A}$ to be an embedding
(Prop.~\ref{prop:preimage}).  Since the domain of $\homsched$ is
$\Shift{q}^3$ and $\homsched=\homloc\circ\hominp$, we must choose
$\Shift{q}^2\leq \sym{A}$ for the subset of the symmetries of $A$, so
that $\homloc:\Shift{q}^2\times \intp{t}\rightarrow \torus{2}{q}
\times \intp{t}$. Thus, Fig.~\ref{fig:maincd} specializes to
Fig.~\ref{fig:mm-2d}.

\begin{figure}[t]
  \centering
\begin{tikzpicture}[descr/.style={fill=white}]
  \matrix (m) [matrix of math nodes, row sep=2em, column sep=3.5em]
{
    & X & & X \\
    A\times T & & A\times T & \\
    & P \times T & & P \times T \\
    & T &  & T  \\
};
  \path[-stealth,->,>=angle 90,font=\scriptsize]
    (m-1-2) edge node[descr,name=G] {$\Shift{q}\times\Shift{q}\times\Shift{q}$} (m-1-4)
            edge [red] node[above] {$\inp{A}$} (m-2-1)
            edge [right hook->,red,dotted] node[auto] {$\sched$} (m-3-2)
    (m-1-4) edge [right hook->,red] node[right] {$\sched$} (m-3-4)
            edge [red] node[above] {$\inp{A}$} (m-2-3)
    (m-2-1) edge node[descr,name=H] {$(\Shift{q}\times\Shift{q})\times(\intp{t})$} (m-2-3)
            edge [right hook->,red] node[below] {$\loc{A}$} (m-3-2)
    (m-2-3) edge [right hook->,red] node[below] {$\loc{A}$} (m-3-4)
    (m-3-2) edge node[descr,name=P] {$\torus{q}{2}\times(\intp{t})$} (m-3-4)
            edge node[right] {$\pi_T$} (m-4-2)
    (m-4-2) edge node[descr] {$\intp{t}$} (m-4-4)
    (m-3-4) edge node[right] {$\pi_T$} (m-4-4);

\draw[->,color=blue,dotted]
(G) --  node[left,near start,font=\scriptsize] {$\homsched$} (P);
\draw[->,color=blue] 
(G) -- node[above,font=\scriptsize] {$\hominp$} (H);
\draw[->,color=blue]
(H) -- node[below,font=\scriptsize] {$\homloc$} (P);

\node[draw,blue,rectangle] at ([yshift=1.5em]current bounding box.south west){$\homsched=\homloc\circ\hominp$};
\end{tikzpicture}
  \caption{Matrix Multiplication of size $n\times n\times n$ on a 2D
    torus network of size $q\times q$ with one copy of input set $A$.
    $\loc{A}$ and $f$ are embeddings.  Similar equations can be
    written for input set $B$ and output set $C$.  }
  \label{fig:mm-2d}
\end{figure}
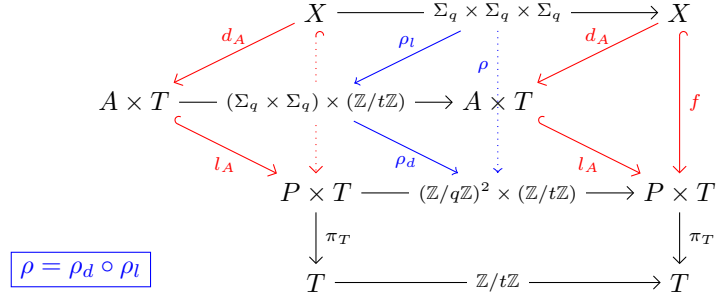

The homomorphism $\homsched:\Shift{q}\times\Shift{q}\times\Shift{q}
\rightarrow \torus{q}{2} \times\intp{t}$ is determined by the images
of generators $\shift$ in each $\Shift{q}$.  Let $g_x=(1,0),g_y=(0,1)$
be the generators of $\torus{q}{2}$, and let $\delta_t$ which
represents one increment in time be the generator of $\intp{t}$. Since
fixing the image of generators of the domain $\Shift{q}^3$ fixes the
rest of the homomorphism, suppose that
\begin{eqnarray*}
\homsched 
&:\sigma_1:=(\shift, e_{\Shift{q}}, e_{\torus{q}{2}}) &\mapsto ((x_1 g_x + y_1 g_y), t_1\delta_t)\\  
&:\sigma_2:=(e_{\Shift{q}}, \shift, e_{\torus{q}{2}}) &\mapsto ((x_2 g_x + y_2 g_y), t_2\delta_t)\\  
&:\sigma_3:=(e_{\Shift{q}}, e_{\Shift{q}}, \shift) &\mapsto ((x_3 g_x + y_3 g_y), t_3\delta_t).
\end{eqnarray*}

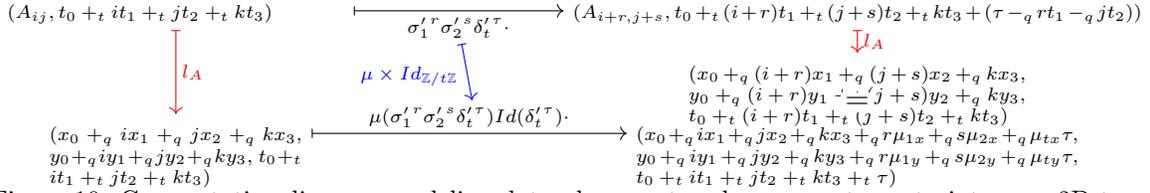
\begin{figure*}[th!]
  \begin{center}
\begin{tikzcd}[row sep=0.9em,column sep=10em,font=\scriptsize]
    \parbox[t]{16em}{$(A_{ij}, t_0\add{t}it_1\add{t}jt_2\add{t}kt_3)$}
      \arrow[red,mapsto]{dd}{\loc{A}}
      \arrow[mapsto]{r}[below,name=gact]{{\sigma'_1}^{r}{\sigma'_2}^{s}{\delta'_t}^{\tau}\cdot}& 
    \parbox[t]{27em}{$(A_{i+r,j+s}, t_0\add{t}(i+r)t_1\add{t}(j+s)t_2
                       \add{t}kt_3+(\tau\subt{q}rt_1\subt{q}jt_2))$}
      \arrow[red,mapsto]{d}{\loc{A}} \\
      &
    \parbox[t]{16em}{$(x_0 \add{q} (i+r)x_1 \add{q} (j+s)x_2 \add{q} kx_3$,
      $y_0 \add{q} (i+r)y_1 \add{q} (j+s)y_2 \add{q} ky_3,$
      $\ \ \ t_0\add{t}(i+r)t_1\add{t} (j+s)t_2\add{t}kt_3)$}
      \arrow[-,dotted]{d}[description,font=\large]{=} \\
    \parbox[t]{12em}{$(x_0 \add{q} ix_1 \add{q} jx_2 \add{q} kx_3$,
      $y_0 \add{q} iy_1 \add{q} jy_2 \add{q} ky_3,$
      $t_0\add{t}it_1\add{t}jt_2\add{t}kt_3)$}
      \arrow[mapsto]{r}[name=hact]
            {\mu({\sigma'_1}^{r}{\sigma'_2}^{s}{\delta'_t}^{\tau})Id({\delta'_t}^{\tau})\cdot} &
    \parbox[t]{21em}{$(x_0 \add{q} ix_1 \add{q} jx_2 \add{q} kx_3
      \add{q} r\mu_{1x}\add{q} s\mu_{2x} \add{q}\mu_{tx}\tau,$
      $y_0 \add{q} iy_1 \add{q} jy_2 \add{q} ky_3
      \add{q} r\mu_{1y}\add{q} s\mu_{2y} \add{q}\mu_{ty}\tau,$
      $t_0\add{t}it_1\add{t}jt_2\add{t}kt_3\add{t}\tau)$}
\arrow[mapsto,blue,to path={([xshift=10pt]gact) -- node[left] {$\mu\times Id_{\intp{t}}$} (hact)}]{}
\end{tikzcd}
\label{fig:comm-sol-tori}
\caption{Commutative diagram modeling data placement and movement
  constraints on a 2D torus.}
  \end{center}
\end{figure*}

The homomorphism $\homsched$ with respect to which the schedule
$\sched$ is equivariant fixes $\sched$ for every $x\in X$ if we pick
its value at one point. If we choose $ \sched: X_{000} \mapsto
(x_0,y_0,t_0)$,
\footnote{using the notation $(x_0,y_0)$ for $P_{x_0,y_0}$} 
then $\sched$ maps $X_{ijk}=\sigma_1^i\sigma_2^j\sigma_3^k\cdot
X_{000} \mapsto
\homsched(\sigma_1)^i\homsched(\sigma_2)^j\homsched(\sigma_3)^k
\cdot(x_0,y_0,t_0)=$
\[
(x_0 \add{q} ix_1\add{q} jx_2\add{q}kx_3,
    y_0 \add{q} iy_1\add{q} jy_2\add{q}ky_3,
 t_0 \add{t} it_1\add{t} jt_2\add{t}kt_3).
\]
Since the input variable $A_{ij}$ is required for the instruction
$X_{ijk}$, the choice of $\sched(X_{000})$ forces for all $k\in[q]$
the map
\[
\loc{A}:(A_{ij},t_0 \add{t} it_1\add{t} jt_2\add{t}kt_3) \mapsto
\homsched(\sigma_1)^i\homsched(\sigma_2)^j\homsched(\sigma_3)^k
\cdot(x_0,y_0,t_0).
\]

Let $\homloc$ be the homomorphisms corresponding to the placement of
the input set $A$.  Recall that $\homloc$ is of the form $ ( \mu:
\Shift{q}^2 \times \intp{t} \rightarrow \torus{q}{2} ) \times
Id_{\intp{t}}. $ A generator set for the domain of $\homloc$ consists
of the two one-step cyclic shifts of $\Shift{q}^2$ and the one time
step increment $\delta'_t$.  Suppose that the image of $\homloc$ on
the generators is
\begin{eqnarray*}
\mu: & \sigma'_1=(\shift, e_{\Shift{q}} , e_{\intp{t}})
&\mapsto \mu_{1x}g_x+\mu_{1y}g_y\\
\mu: & \sigma'_2=(e_{\Shift{q}}, \shift , e_{\intp{t}})
 &\mapsto \mu_{2x}g_x+\mu_{2y}g_y\\
\mu: & \delta'_t=(e_{\Shift{q}}, e_{\Shift{q}}, \delta_t)
 &\mapsto \mu_{tx}g_x +\mu_{ty}g_y.
\end{eqnarray*}

This gives us the diagram in Fig.~\ref{fig:comm-sol-tori} which
commutes only when $\tau=rt_1 \add{t} st_2$, and
$r(x_1\subt{q}\mu_{1x} \add{q}\mu_{tx}t_1) \add{q}
s(x_2\subt{q}\mu_{2x}\add{q}\mu_{tx}t_2)$ and $r(x_1\subt{q}\mu_{1x}
\add{q}\mu_{tx}t_1) \add{q} s(x_2\subt{q}\mu_{2x}\add{q}\mu_{tx}t_2)$
vanish for all $r,s\in\ints$. This represents a set of equations,
which hold when
\[
 \left( \begin{array}{cc}
\mu_{1x}\subt{q}x_1 & \mu_{1y}\subt{q}y_1 \\
\mu_{2x}\subt{q}x_2 & \mu_{2y}\subt{q}y_2 
\end{array} \right)
\equiv_q
 \left( \begin{array}{c}
t_1 \\
t_2
\end{array} \right)
\cdot
 \left( \begin{array}{cc}
\mu_{tx} &  \mu_{ty}
\end{array} \right).\] 

For the equivariant maps $\sched$ and $\loc{A}$ to be embeddings, the
images of the homomorphisms $\homsched$ and $\homloc$ must be at least
$q^3$ and $q^2t$ in size respectively.  Therefore, the group generated
by $((x_i g_x + y_i g_y),t_i\delta_t)$ for $i=1,2,3$ must be
isomorphic to $\torus{q}{2}\times\intp{q}$, and the group generated by
$(\mu_{ix}g_x+\mu_{iy}g_y)$ for $i=1,2$ must be isomorphic to
$\torus{q}{2}$.

The minimum number of steps for $\sched$ to be an embedding is $t=q$.
The communication cost associated with the homomorphism $\mu$ is the
number of hops associated with the network element
$(\mu_{tx},\mu_{ty})$ multiplied by the number of variables $p^2$ in
the set $A$ multiplied by the number of time steps $p$. Ideally,
$|\mu_{tx}|+|\mu_{ty}|$ is $1$ or $0$. Note that this can not be the
case for all three of $A,B,C$; the movement cost factor determined by
$\mu$ can be $0$ for at most one of them. The Cannon's algorithm
(Fig.~\ref{fig:cannon}) corresponds to the values
\[
\left| \begin{array}{ccc}
x_1 & y_1 & t_1 \\
x_2 & y_2 & t_2 \\
x_3 & y_3 & t_3 
\end{array}
=
\begin{array}{ccc}
1 & 0 & -1 \\
0 & 0 & 1 \\
0 & 1 & -1
\end{array}\right|
\left|
\begin{array}{cc}
\mu_{1x} & \mu_{1y} \\
\mu_{2x} & \mu_{2y} \\
\mu_{tx} & \mu_{ty} 
\end{array}
=
\begin{array}{cc}
1 & 1 \\
0 & 1 \\
0 & 1
\end{array}
\right|,
\]
where $\mu$ corresponds to the input set $A$.  The values for $B$ and
$C$ can be similarly calculated.  Other Cannon-like algorithms can be
obtained by setting the variables in the above
$(x_i,y_i,t_i)_{i=1,2,3}$ and $(\mu_{ix}, \mu_{iy})_{i=1,2}$ matrices
to make them unimodular (determinant $\pm 1$). The row and
column-permutation flexibility in the Cannon's algorithm arises from
the choice of assigning variables indices $i,j,k$ to the rows and
columns in the data.

One might observe that the above calculations have the flavor of
linear maps between vector spaces \cite{KL83} and unimodular
transformations such as in \cite{WL91-TPDS}.  This is expected since
the group modeling the network here is also a field.

When $l,m,n\geq q$, a blocked version of the Cannon's algorithm can be
derived for the $q\times q$-size 2D-torus machine.  Suppose that
$l=q\cdot q_l,m=q\cdot q_m$ and $n=q\cdot q_n$. The subgroup
$(\perm{q_l}\wr\perm{q})$ of $\perm{l}$ has the subgroup
$(\perm{q_l}\wr\Shift{q})$ where $\Shift{q}$ acts on the $q$
permutation blocks of size $\perm{q_l}$ by shifting them cyclically.
We have the transitive subgroups $\perm{q_l}\wr\Shift{q}
\leq\perm{q_l}\wr\perm{q}\leq \perm{l}$. Similar transitive subgroups
can be listed for $\perm{m}$ and $\perm{n}$. A homomorphism
$\rho':\Shift{q}^3\rightarrow \torus{q}{2}\times\intp{t}$ can be
extended to a homomorphism 
\[
\homsched:(\perm{q_l}\wr\Shift{q})\times(\perm{q_m}\wr\Shift{q})
\times(\perm{q_n}\wr\Shift{q}) \rightarrow\torus{q}{2}\times\intp{t}
\]
by augmenting $\rho'$ with the projection of the subgroups
$\perm{q_l},\perm{q_m}$ and $\perm{q_n}$ to the identity
$e_{\torus{q}{2}\times\intp{t}}$.  A schedule equivariant with such a
$\homsched$ forces the homomorphism $\homloc$ to map at least $q_l
q_m$ elements of $A$ to each node at any point in time.  Therefore, if
we allow at least $q_l q_m + q_m q_n + q_n q_l$ memory on each node
for the blocks of the input and the output data, the machinery
developed for the case $l=m=n=q$ extends to yield a blocked version of
Cannon-like algorithms.

A blocked version of Cannon-like algorithm does not minimize
communication for all values of parameters.  For $n\times n\times
n$-sized matrix multiplication, the communication induced by
blocked-Cannon's algorithms on $\sqrt{p}\times\sqrt{p}$ processors is
$3\sqrt{p}\times p\times n^2/p = 3n^2\sqrt{p}$, which amounts to
$3n^2/\sqrt{p}$ per node.  When the memory available per node $M$ is
more than what is required to store one copy of $A,B$ and $C$ across
$p$ nodes, i.e.  $p M \geq 3n^2$, the communication cost per node
$3n^2/p$ is greater than the lower bound $3n^3/\sqrt{M}$
\cite{ITT04,HBL}.

When there is sufficient space for $c$ copies of each variables in
$A,B$ and $C$: $p M\geq c\times 3n^2$, taking advantage of extra
memory to replicate $A,B$ and $C$ variable sets $c$-fold can reduce
communication to the minimum required.  Let $q=\sqrt{p/c}$.  The
``2.5D-algorithm'' \cite{SD11} achieves optimality on a 3D-toroidal
network of dimensions $q \times q \times c$ with the network group
$\intp{q} \times \intp{q} \times \intp{c}$.  The schedule (i)
partitions the torus into $c$ 2D-tori layers of size $q \times q$,
(ii) assigns one copy of $A,B$ and $C$ to each of the $c$ layers,
(iii) maps the variables to the nodes in blocks of size $n\sqrt{c}/p
\times n\sqrt{c}/p$, (iv) performs $t$-skewed steps of the Cannon's
schedule in each sub-grid.  See section~\ref{sec:2-5D} for a detailed
derivation of this schedule.

\subsection{Fat-trees: recursive schedules}

We will now consider machines on which recursive versions of matrix
multiplication are useful.  We will start with $4\times 4\times
4$-size matrix multiplication on the fat-tree with $4$ processors.  As
described in Section~\ref{sec:machine-model}, the network is modeled
by the group $N:=\iterwr{2}{2}$.  Suppose also that each processing
node has three units of memory, one for each copy of a variable from
the sets $A,B,C$.  At least two time steps are needed for completion.
Let $T=\{t_i\}_{i\in[2]}$ and $\Delta=\intp{2}$.  The schedule is
determined by $\sched(X_{000})$ and the homomorphism from
$G=\perm{2}\times\perm{2}\times\perm{2}$ to the group $N\times\tinc =
\iterwr{2}{2}\times\intp{2} \cong ((\perm{2}\times\perm{2})
\rtimes\perm{2}) \times \intp{2}$.  The group $G$ is generated by
$\sigma_i,\sigma_j$ and $\sigma_k$ which flip the indices $i,j$ and
$k$ respectively. The group $N$ is generated by $\sigma_{00}$ and
$\sigma_{01}$ which swap the processors in the left and right halves
of the machine (lower permutations), and $\sigma_{10}$ which swaps the
left and the right pair of processors (the higher permutation).  Let
$\sigma_t$ be the generator of $\tinc$.  Since $\sched$ is an
embedding (at most one instruction can be computed at a processor at
each time step), the image $\homsched(G)$ of $G$ should be isomorphic
to $G$. For this, the image of each of $\sigma_i,\sigma_j,\sigma_k$
with respect to the lower permutations should be the same.  Further,
at least one of the three images should affect each of the lower and
upper permutations in $N$ and the generator of $\Delta$.  Such
homomorphisms are easily enumerated. One of them is described in
Fig.~\ref{fig:rec2}.
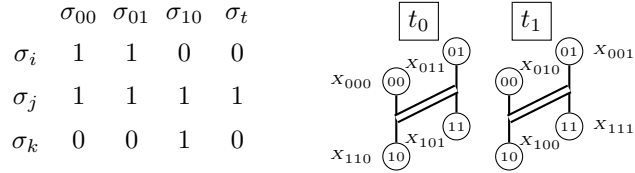
\begin{figure}[!htb]
  \begin{center}
\hspace{-15pt}
\begin{tikzpicture}
  \matrix (template) [matrix of nodes,nodes={inner sep=0pt,text width=20pt,align=center,minimum height=.5cm}]{
             & $\sigma_{00}$ & $\sigma_{01}$ & $\sigma_{10}$ & $\sigma_{t}$ \\
  $\sigma_i$ & 1 & 1 & 0 & 0 \\
  $\sigma_j$ & 1 & 1 & 1 & 1 \\
  $\sigma_k$ & 0 & 0 & 1 & 0 \\
 };
\end{tikzpicture}\hspace{15pt}
\begin{tikzpicture}
\tikzstyle{vertex}=[circle,draw,minimum size=6pt,inner sep=1pt,font=\tiny]
\tikzstyle{leftsmall}=[left,font=\tiny]
\tikzstyle{rightsmall}=[right,font=\tiny]
\tikzstyle{edge} = [draw,thick,-,black]
\tikzstyle{invisible} = [draw=none,minimum size=0pt]
\tikzstyle{doubleedge} = [draw,thick,double distance=2pt,black]

\node[vertex] (x00) at (0,1) {${00}$};
\node[leftsmall] at (x00.west) {$X_{000}$};
\node[vertex] (x01) at (0.8,1.4) {${01}$};
\node[leftsmall] at (x01.south) {$X_{011}$};
\node[vertex] (x10) at (0,0) {${10}$};
\node[leftsmall] at (x10.west) {$X_{110}$};
\node[vertex] (x11) at (0.8,0.4) {${11}$};
\node[leftsmall] at (x11.south) {$X_{101}$};
\draw[edge] (x00) -- (x10);
\draw[edge] (x01) -- (x11);
\tikzstyle{invisible} = [draw=none,minimum size=0pt]
\tikzstyle{doubleedge} = [draw,thick,double distance=2pt,black,xslant=0.4]
\node[invisible] (x0) at (0,0.5) {};
\node[invisible] (x1) at (0.8,0.9) {};
\draw[doubleedge] (x0.center) -- (x1.center);

\node[vertex] (xt00) at (1.5,1) {${00}$};
\node[rightsmall] at (xt00.north) {$X_{010}$};
\node[vertex] (xt01) at (2.3,1.4) {${01}$};
\node[rightsmall] at (xt01.east) {$X_{001}$};
\node[vertex] (xt10) at (1.5,0) {${10}$};
\node[rightsmall] at (xt10.north) {$X_{100}$};
\node[vertex] (xt11) at (2.3,0.4) {${11}$};
\node[rightsmall] at (xt11.east) {$X_{111}$};
\draw[edge] (xt00) -- (xt10);
\draw[edge] (xt01) -- (xt11);
\tikzstyle{invisible} = [draw=none,minimum size=0pt]
\tikzstyle{doubleedge} = [draw,thick,double distance=2pt,black,xslant=0.4]
\node[invisible] (xt0) at (1.5,0.5) {};
\node[invisible] (xt1) at (2.3,0.9) {};
\draw[doubleedge] (xt0.center) -- (xt1.center);

\node[draw,rectangle] () at (0.3,1.8) {$t_0$};
\node[draw,rectangle] () at (1.8,1.8) {$t_1$};
\end{tikzpicture}
  \caption{The homomorphism $\homsched$ on the left. One indicates
    that the generator corresponding to the row flips the permutation
    in the column and zero indicates that the column element is
    unaffected. The corresponding schedule $f$ on the right. The index
    in the circle represents the processor number. The two slices
    correspond to the two time steps $t_0$ and $t_1$.  }
\label{fig:rec2}
\end{center}
\end{figure}

With the choice of $\sched(X_{000})=(P_{00},t_0)$, the homomorphism
$\homsched$ fixes the schedule $\sched$ for all other $X_{ijk}$ as in
Fig.~\ref{fig:rec2} (right).  The data placement and movement forced
by this schedule is illustrated below in Fig.~\ref{fig:rec-data}.

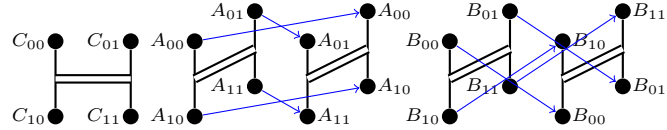
\begin{figure}[!htb]
  \begin{center}
\hspace{-15pt}
\begin{tikzpicture}
\tikzstyle{vertex}=[circle,fill=black,minimum size=6pt,inner sep=0pt,]
\tikzstyle{leftsmall}=[left,font=\scriptsize]
\tikzstyle{edge} = [draw,thick,-,black]
\tikzstyle{invisible} = [draw=none,minimum size=0pt]
\tikzstyle{doubleedge} = [draw,thick,double distance=2pt,black]

\node[vertex] (c00) at (0,1) {};
\node[leftsmall] at (c00) {$C_{00}$};
\node[vertex] (c01) at (1,1) {};
\node[leftsmall] at (c01) {$C_{01}$};
\node[vertex] (c10) at (0,0) {};
\node[leftsmall] at (c10) {$C_{10}$};
\node[vertex] (c11) at (1,0) {};
\node[leftsmall] at (c11) {$C_{11}$};
\draw[edge] (c00) -- (c10);
\draw[edge] (c01) -- (c11);
\node[invisible] (c0) at (0,0.5) {};
\node[invisible] (c1) at (1,0.5) {};
\draw[doubleedge] (c0.center) -- (c1.center);
\end{tikzpicture}
\hspace{-6pt}
\begin{tikzpicture}
\tikzstyle{vertex}=[circle,fill=black,minimum size=6pt,inner sep=0pt]
\tikzstyle{leftsmall}=[left,font=\scriptsize]
\tikzstyle{rightsmall}=[right,font=\scriptsize]
\tikzstyle{edge} = [draw,thick,-,black]
\tikzstyle{invisible} = [draw=none,minimum size=0pt]
\tikzstyle{doubleedge} = [draw,thick,double distance=2pt,black]

\node[vertex] (a00) at (0,1) {};
\node[leftsmall] at (a00) {$A_{00}$};
\node[vertex] (a01) at (0.8,1.4) {};
\node[leftsmall] at (a01) {$A_{01}$};
\node[vertex] (a10) at (0,0) {};
\node[leftsmall] at (a10) {$A_{10}$};
\node[vertex] (a11) at (0.8,0.4) {};
\node[leftsmall] at (a11) {$A_{11}$};
\draw[edge] (a00) -- (a10);
\draw[edge] (a01) -- (a11);
\tikzstyle{invisible} = [draw=none,minimum size=0pt]
\tikzstyle{doubleedge} = [draw,thick,double distance=2pt,black,xslant=0.4]
\node[invisible] (a0) at (0,0.5) {};
\node[invisible] (a1) at (0.8,0.9) {};
\draw[doubleedge] (a0.center) -- (a1.center);

\node[vertex] (at00) at (1.5,1) {};
\node[rightsmall] at (at00) {$A_{01}$};
\node[vertex] (at01) at (2.3,1.4) {};
\node[rightsmall] at (at01) {$A_{00}$};
\node[vertex] (at10) at (1.5,0) {};
\node[rightsmall] at (at10) {$A_{11}$};
\node[vertex] (at11) at (2.3,0.4) {};
\node[rightsmall] at (at11) {$A_{10}$};
\draw[edge] (at00) -- (at10);
\draw[edge] (at01) -- (at11);
\tikzstyle{invisible} = [draw=none,minimum size=0pt]
\tikzstyle{doubleedge} = [draw,thick,double distance=2pt,black,xslant=0.4]
\node[invisible] (at0) at (1.5,0.5) {};
\node[invisible] (at1) at (2.3,0.9) {};
\draw[doubleedge] (at0.center) -- (at1.center);

\draw[blue,->] (a00) -- (at01);
\draw[blue,->] (a01) -- (at00);
\draw[blue,->] (a10) -- (at11);
\draw[blue,->] (a11) -- (at10);
\end{tikzpicture}
\hspace{-16pt}
\begin{tikzpicture}
\tikzstyle{vertex}=[circle,fill=black,minimum size=6pt,inner sep=0pt]
\tikzstyle{leftsmall}=[left,font=\scriptsize]
\tikzstyle{rightsmall}=[right,font=\scriptsize]
\tikzstyle{edge} = [draw,thick,-,black]
\tikzstyle{invisible} = [draw=none,minimum size=0pt]
\tikzstyle{doubleedge} = [draw,thick,double distance=2pt,black]

\node[vertex] (b00) at (0,1) {};
\node[leftsmall] at (b00) {$B_{00}$};
\node[vertex] (b01) at (0.8,1.4) {};
\node[leftsmall] at (b01) {$B_{01}$};
\node[vertex] (b10) at (0,0) {};
\node[leftsmall] at (b10) {$B_{10}$};
\node[vertex] (b11) at (0.8,0.4) {};
\node[leftsmall] at (b11) {$B_{11}$};
\draw[edge] (b00) -- (b10);
\draw[edge] (b01) -- (b11);
\tikzstyle{invisible} = [draw=none,minimum size=0pt]
\tikzstyle{doubleedge} = [draw,thick,double distance=2pt,black,xslant=0.4]
\node[invisible] (b0) at (0,0.5) {};
\node[invisible] (b1) at (0.8,0.9) {};
\draw[doubleedge] (b0.center) -- (b1.center);

\node[vertex] (bt00) at (1.5,1) {};
\node[rightsmall] at (bt00) {$B_{10}$};
\node[vertex] (bt01) at (2.3,1.4) {};
\node[rightsmall] at (bt01) {$B_{11}$};
\node[vertex] (bt10) at (1.5,0) {};
\node[rightsmall] at (bt10) {$B_{00}$};
\node[vertex] (bt11) at (2.3,0.4) {};
\node[rightsmall] at (bt11) {$B_{01}$};
\draw[edge] (bt00) -- (bt10);
\draw[edge] (bt01) -- (bt11);
\tikzstyle{invisible} = [draw=none,minimum size=0pt]
\tikzstyle{doubleedge} = [draw,thick,double distance=2pt,black,xslant=0.4]
\node[invisible] (bt0) at (1.5,0.5) {};
\node[invisible] (bt1) at (2.3,0.9) {};
\draw[doubleedge] (bt0.center) -- (bt1.center);

\draw[blue,->] (b00) -- (bt10);
\draw[blue,->] (b10) -- (bt00);
\draw[blue,->] (b01) -- (bt11);
\draw[blue,->] (b11) -- (bt01);
\end{tikzpicture}  
  \caption{Data movement implied by Fig.~\ref{fig:rec2}}
  \label{fig:rec-data}
\end{center}
\end{figure}

The data movement homomorphism $\mu_C:\iterwr{2}{2}\times\tinc
\rightarrow \iterwr{2}{2}$ corresponding to $C$ is the trivial map to
the identity of $\iterwr{2}{2}$. No data is moved between the time
steps.  The elements in $A$ are swapped across the higher level
permutation between time steps:
\[
\mu_A:(((e_{00},e_{01}),\sigma_{10}),\sigma_t) \mapsto
((e_{00},e_{01}),\sigma_{10}).
\]
Therefore four elements of $A$ cross the higher-level connection and
two elements traverse each of the four lower-level connections. The
elements in $B$ are swapped across the lower level permutation between
time steps
\[
\mu_B:(((\sigma_{00},\sigma_{01}),e_{10}),\sigma_t) \mapsto
((\sigma_{00},\sigma_{01}),e_{10}).
\]
This corresponds to moving two elements on each of the lower-level
connections. In total, $4$ units of data are moved across the
higher-level link and $8$ units are moved across the lower-level
links. This is the minimum amount of communication required for
classical matrix multiplication on this topology with the memory
limitations described.

We will inductively construct schedules for $n\times n\times n$-matrix
multiplication on a $n^2$-size fat- for any $n=2^d$.  As before,
assume that each node has only three units of memory, one for a
variable each from the sets $A,B,C$.

The network group is $N_d=\iterwr{2}{2d}$.  We pick the transitive
subgroup $\iterwr{2}{d}$ of $\perm{n}$ so that the subgroup of
$\sym{X}$ we choose is $G_d=\iterwr{2}{d}\times \iterwr{2}{d} \times
\iterwr{2}{d}$.  $G_d$ is also constructed from
$\perm{2}\cong\intp{2}$ Like the network group.  Therefore for
homomorphism from $G_d$, it is superfluous to construct $\tinc$ from
any group other than $\perm{2}$ (doing so would use only few of the
time increments in the group).  Since we need at least $n=2^d$
different time steps, let $\tinc=\tinc_d=\perm{2}^d$ and
$T=\prod_{i=1}^{d} \timepr{i}$, $\timepr{l}=\{t_{l,0},t_{l,1}\}$.

Suppose we have a schedule and the corresponding homomorphism
$\homsched_{d-1}: (\iterwr{2}{d-1})^3 \rightarrow \iterwr{2}{2d-2}
\times \perm{2}^{d-1}$ for $n=2^{d-1}$.  To construct $\homsched_d$,
we put together $\homsched_{d-1}$, and $\homsched$ for the base case
$d=1$ discussed earlier.  For this, first note that $N_d$ is
isomorphic to $N_{d-1}\wr \iterwr{2}{2}$ which is the two-level
permutation $\iterwr{2}{2}$ acting on four instances of $N_{d-1}$.
Further, $\tinc_d \cong \tinc_{d-1}\times\intp{2}$.

A generator set of $G_{d-1}$ consists of the union of three generators
sets $\chi_i,\chi_j,\chi_k$ of the three $\iterwr{2}{d-1}$
corresponding to indices $i,j,k$.  Since $G_d \cong
(((\iterwr{2}{d-1})^2)\rtimes \perm{2})^3$, a generator set of $G_d$
is two copies of the generator set of $G_{d-1}$,
$\{\chi_{i1},\chi_{i2},\chi_{j1},\chi_{j2},\chi_{k1},\chi_{k2}\}$
augmented with the generators $\sigma_{i,d},\sigma_{j,d},\sigma_{k,d}$
of the three top-level permutations.  Since $N_d\times \tinc_d \cong
(N_{d-1}^4\rtimes\iterwr{2}{2}) \times(\tinc_{d-1}\times\intp{2})$, a
generator set for it is four copies
$\{\chi_{n1},\chi_{n2},\chi_{n3},\chi_{n4}\}$ of the generator set
$\chi_n$ of $N_{d-1}$, one copy of the generator set $\chi_{t,d-1}$ of
$\tinc_{d-1}$ augmented with the generators
$\{\sigma_{n,00},\sigma_{n,01},\sigma_{n,11}\}$ of the top two-level
permutation $\iterwr{2}{2}$, and the generator $\sigma_{t,d}$ of
$\intp{2}$.

To construct $\homsched_{d}$, we map $\chi_{i1}$ to $\chi_{n1}$ and
$\chi_{t,d-1}$ in the same way that $\chi_i$ is mapped to $\chi_{n}$
and $\chi_{t,d-1}$ in $\homsched_{d-1}$. We map other copies of the
generator sets similarly. We map the top-level generators
$\{\sigma_{i,d},\sigma_{j,d},\sigma_{k,d}\}$ of $G_d$ to the top-level
generators of $N_d\times \tinc_d$,
$\{\sigma_{n,00},\sigma_{n,01},\sigma_{n,11}, \sigma_{t,d}\}$ as in
Fig.~\ref{fig:rec2}.

The corresponding schedule is the recursive schedule that splits $A,B$
and $C$ into four quadrants, splits the machine in to four quadrants
at the two highest levels of the hierarchy, splits the time steps into
two parts based on $\timepr{d}$ and does the eight smaller matrix
multiplications on the eight-fold partition of $P\times T$ in the same
way as in Fig.~\ref{fig:rec2}. The smaller matrix multiplications are
scheduled in the subsets of $P\times T$ similarly. The schedule never
moves $C$, moves $n^2$ amount of data corresponding to $A$ across the
highest $2d$-level connection and moves $2n^2$ amount of data
corresponding to $A$ and $B$ across the $(2d-1)$-level links. This is
the minimum amount of communication required for this machine.  To
compute the time, $\tinc$ can be flattened appropriately, with the
stretch of each time step at a particular level reflecting the time
required for communication at the level in the hierarchy.

\subsection{Caches and Parallel Memory Hierarchies}

An $(h+1)$-level inclusive parallel memory hierarchy \cite{ACF93} (see
Fig.~\ref{fig:pmh}) is a tree of caches with $(h+1)$ levels indexed by
$[h+1]$.  Each node at level-$i$ represents a cache of size $M_i$, and
has $f_i$ caches beneath it at level-$(i-1)$. The root of the tree
represents the main memory (level $h$) and the processors are the
$p=f_h f_{h-1} \dots f_1$ leaves at level $0$.  Suppose that (i) for
each $i>0$, $M_i = 3\cdot2^{d_i}$ and $f_i = 2^{c_i}$ for some
positive even integers $c_i<d_i$, (ii) each processors has three
registers so that $M_0=3$, and (iii) the cache lines are one word
long.

We will model this machine with $M_h/3$ nodes, each node with $3$
words of memory.  All the nodes collectively represent the
all-inclusive level-$h$ cache.  Of these, the first $f_h M_{h-1}/3$
nodes represent the $f_h$ level-$(h-1)$ caches. In each of these $f_h$
blocks of size $M_{h-1}/3$, the first $f_{h-1}M_{h-2}/3$ are
identified with level $f_{h-1}$ of level-$(h-2)$ caches, and so on for
each level. The network group is inductively defined:
$N_1=\perm{M_1/3}$ and $N_i=N_{i-1}\wr \perm{M_i/M_{i-1}}$, where the
action of the wreath product can be seen as allowing the permutation
of the contents of the $M_i/M_{i-1}$ partitions of level-$i$ cache,
with $f_i$ of the partitions representing the level-$(i-1)$.  Let $T =
\prod_{i=1}^{h} T_i$, where $|T_i| =
t_i=2^{3((d_i-d_{i-1})/2-c_i/3)}$, and $\tinc = \tinc_h$ where
$\tinc_l=\prod_{i=1}^{l}\intp{t_i}$.

As before we will construct the schedule and the homomorphism
$\homsched$ by induction on the levels of the (cache) hierarchy.
Consider $2^{d_i/2}\times 2^{d_i/2} \times 2^{d_i/2} $-size
multiplication with instruction set $X_i$, the largest that fits in
a level-$i$ cache.  For $i=1$,
$G_1=(\Shift{2^{c_1/3}}\wr\iterwr{2}{(d_1/2-c_1/3)})^3$ is a
transitive subgroup of $\sym{X_1}$. $N_1$ has the subgroup
$\Shift{2^{c_1}}\wr\iterwr{2}{d_1-c_1}$.  We choose the homomorphism
$\homsched_1:G_1\rightarrow N_1\times\tinc_1$ that (i) maps
the three $\Shift{2^{c_1/3}}$ in $G_1$ on to $\Shift{2^{c_1}}$ in
$N_1$ by flattening them, and (ii) lifts this map along $\tinc_1$
with the rest of the $3(d_1/2-c_1/3)$ of the top-level $\perm{2}$
permutations by mapping each one to a unique $2^{t'} \in \tinc_1$
according to its order in the iterated wreath product, with the lower
level permutations assigned to smaller $t'$. The schedule for this
corresponds to a Z-order traversal in time of $X_1$ with
$2^{c_1/3}\times 2^{c_1/3} \times 2^{c_1/3}$ size blocks of $X_1$
executed on the $2^{c_i}$ processors in each time step.  Note that
$G_i= (\Shift{2^{c_1/3}}\wr\iterwr{2}{(d_1/2-c_2/3)}) \wr
(\Shift{2^{c_2/3}}\wr\iterwr{2}{((d_2-d_1)/2-c_2/3)}) \wr \dots \wr
(\Shift{2^{c_i/3}}\wr\iterwr{2}{((d_i-d_{i-1})/2-c_i/3)}) )^3 $ is a
transitive subgroup of $\sym{X_i}$. We can extend
$\homsched_{i-1}:G_{i-1}\rightarrow N_{i-1}\times\tinc_{i-1}$ to
$\homsched_i:G_i\rightarrow N_i\times\tinc_i$ using the construction
for $i=1$.

A schedule that is equivariant with such $\homsched_i$ is a
\textbf{space-bounded schedule} \cite{CSBR13,BFGS11} for the parallel
recursive matrix multiplication algorithm \cite{BGS10}.  A
space-bounded scheduler is based on the principle of executing tasks
in a fork-join program (such as in recursive matrix multiplication) on
the processors assigned to the lowest level cache the task fits
in. This minimizes communication at all levels in the hierarchy.  When
all $f_i=1$ and we have one processor, this corresponds to the
execution of the recursive algorithm on a sequential machine with an
$h$-level hierarchy of ideal caches \cite{FLPR99}.

\section{Limitations and Future Work}
We have demonstrated a technique to develop schedules for an algorithm
on different machines using matrix multiplication as an example.
These schedules were time- and communication-optimal on the machines
considered. The technique involves solving commutative diagrams using
knowledge of the subgroup structure of the symmetries of
the algorithm, and optimizing over the time and communication costs
associated with possible homomorphisms to the groups representing the
network and time increments.

However effective it might be for the example at hand, it is to be
noted that we have applied this technique ``by hand'' without
addressing the computational complexity of this technique.  It is
worth considering whether this procedure can be efficiently performed
using computational algebra packages \cite{GAP}.

Classical matrix multiplication is about the easiest example to
demonstrate the efficacy of this technique. This does not imply this
technique works for every algorithm.  In fact, several algorithms do
not allow symmetry preserving schedule for certain topologies.
 Such algorithms might need a partitioning of $X$ and a separate
 symmetry-preserving map to schedule each partition.  However, matrix
 multiplication is a good start since it is the building block of
 numerical linear algebra.  To model these algorithms, an immediate
 goal would be to extend the model to handle dependencies.

Relevant directions for further work include integration of the model
with (a) models for irregularity \cite{BFGS11}, (b) non-constant
replication factors such as in the SUMMA algorithm, (c) higher level
abstractions for communication and space tradeoffs, (d) communication
lower bounds \cite{HBL}; and extending the model for (e) asynchronous
time steps, and (f) overlapping computation and
communication \cite{LU-overlap}.


\bibliographystyle{abbrv}
\bibliography{ref}  

\appendix
\section{Some Preliminary definitions}
\label{sec:prelims}
\begin{definition}[Group]
A group $G$ is a set with a special ``identity'' element $e_G$ along
with an associative binary operator '+' on the set with respect to
which the set is closed: $g,h\in G \implies g+h\in G$. Further, for
each element $g\in G$: (i) $e_G+g=g+e_G=g$ and (ii) there exists an
inverse denoted by $-g$ such that $g+(-g)=(-g)+g=e$.  Often the binary
operator is denoted by $\cdot$ or not explicitly written when clear
from context. For $k\in \ints, g\in G$, we denote $g+g+.._k+g =:kg$
(or $g^k$ if we use $\cdot$).  A subset $S\subseteq G$ is said to
\defn{generate} $G$ if all elements of $G$ can be expressed as the
combination of finitely many elements of $S$. A subset $K$ of $G$ that
is itself a group w.r.t. the '+' operator and identity $e_G$ is called
a \defn{subgroup} of $G$, and denoted by $K\leq G$. The \textbf{direct
  product} $G\times H$ of two groups $G$ and $H$ is the group defined
by the set $G\times H$ with the operation
$(g,h)\cdot(g',h')=(gg',hh')$ for all $g,g'\in G,\ h,h'\in H$ and the
identity $(e_G,e_H)$.
\end{definition}

\begin{definition}[Homomorphism]
 A homomorphism from the group $G$ to the group $H$ is a function
 $\rho:G\rightarrow H$ s.t. (i) $\rho(e_G)=e_H$, and (ii)
 $\rho(g_1g_2) =\rho(g_1)\rho(g_2)$ for all $g_1,g_2\in G$.  A
 homomorphism is completely fixed by the image of a generator set of
 $G$.  An isomorphism is a bijective homomorphism.  If it exists, we
 say $G$ is isomorphic to $H$: $G\cong H$.
\end{definition}

\begin{definition}[(left) group action]
A left group action of a group $G$ on a set $X$ is a set of functions
that map $X \rightarrow X$ parameterized by the elements $g$ of the
group $G$, and denoted by $g\cdot$, such that (i) $e_G\cdot$ is the
identity map $Id_X$, and (ii) $(gh)\cdot x = g\cdot(h\cdot x)$ for all
$g,h\in G$ and $x\in X$.  We drop the adjective left in the rest of
this paper. When a group action of the group $G$ on the set $X$ is
clear from the context, we represent it diagrammatically as follows.
\begin{center}
\begin{tikzcd}[row sep=0em,column sep=9em]
  X \arrow{r}[description]{G} & X
\end{tikzcd}
\end{center}
\end{definition}

\section{Additional Diagrams}
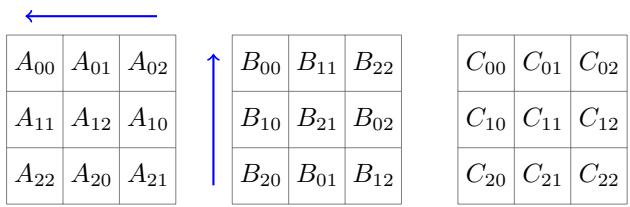
\begin{figure}[!htb]
  \begin{center}
  \begin{tikzpicture}
\draw[step=.75,gray,very thin] (0,0) grid (2.25,2.25);
\draw[step=.75,gray,very thin] (2.999,0) grid (5.25,2.25);
\draw[step=.75,gray,very thin] (5.999,0) grid (8.25,2.25);

\matrix[matrix of nodes,
inner sep=0pt,
anchor=south west,
nodes={inner sep=0pt,text width=.75cm,align=center,minimum height=.75cm}
] at (0,0) {
 $A_{00}$ & $A_{01}$ & $A_{02}$ \\
 $A_{11}$ & $A_{12}$ & $A_{10}$ \\
 $A_{22}$ & $A_{20}$ & $A_{21}$ \\
};

\matrix[matrix of nodes,
inner sep=0pt,
anchor=south west,
nodes={inner sep=0pt,text width=.75cm,align=center,minimum height=.75cm}
] at (3,0) {
 $B_{00}$ & $B_{11}$ & $B_{22}$ \\
 $B_{10}$ & $B_{21}$ & $B_{02}$ \\
 $B_{20}$ & $B_{01}$ & $B_{12}$ \\
};

\matrix[matrix of nodes,
inner sep=0pt,
anchor=south west,
nodes={inner sep=0pt,text width=.75cm,align=center,minimum height=.75cm}
] at (6,0) {
 $C_{00}$ & $C_{01}$ & $C_{02}$ \\
 $C_{10}$ & $C_{11}$ & $C_{12}$ \\
 $C_{20}$ & $C_{21}$ & $C_{22}$ \\
};

\draw[blue,->,thick] (2,2.5) -- (0.25,2.5);
\draw[blue,->,thick] (2.75,0.25) -- (2.75,2);
  \end{tikzpicture}
  \caption{Cannon's Algorithm for a $3\times 3$-torus depicting the
    data placement at $t_0$ and movement between time steps (in blue).
    Each element of $A$ moves one step left in each time step, and
    each element of $B$ moves one step up each time step. $C$ remains
    at the same location.}
  \label{fig:cannon}
    
  \end{center}
\end{figure}

\begin{figure}[!htb]
\centering
  \includegraphics[width=0.49\textwidth]{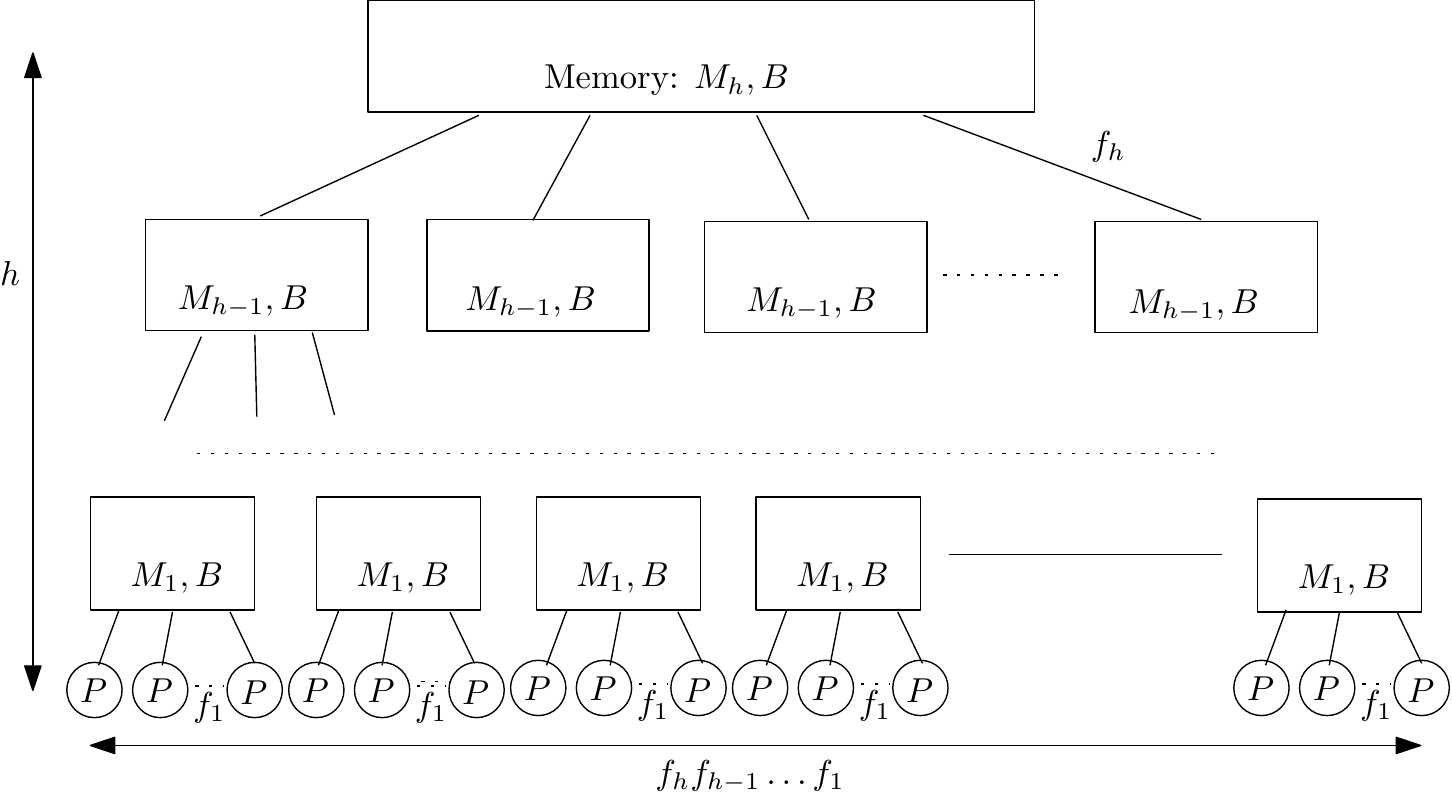}
  \caption{A Parallel Memory Hierarchy \cite{ACF93}. $M_i$ represent
    the sizes of the cache, $B$ represent cache line size. We set
    $B=1$ (this can be relaxed). $f_i$ represents the fan-out at each
    level. Number of processors, which are at the leaves, is $f_h
    f_{h-1}\dots f_1$.}
  \label{fig:pmh}
\end{figure}

\section{Proofs}
\label{sec:proofs}

\begin{proof} [of lemma~\ref{lem:coset-maps}]
If $\alpha$ exists, $\alpha(glL)=\alpha(gL)$ for all $l\in L,g\in G$.
Suppose $\alpha(L)=aK$. Then, $glaK = gaK$ which implies
$a^{-1}laK=K$, and therefore, $a^{-1}la \in K$ or $L^a\subseteq
K$. The converse follows from the construction of the $G$-equivariant
map $\hat{a}:gL\mapsto gaK$. For $l\in L$, $\hat{a}(glL)
=glaK=ga(a^{-1}la)K=gaK = \hat{a}(gL)$ since $a^{-1}la\in K$.  The
uniqueness of this construction follows from the equivariance.
\end{proof}

\begin{proof} [of Lemma~\ref{lem:imp-hom}]
Suppose the image of $\sigma\in G$ is $\rho(\sigma)=x\neq
e_{\intp{q}}$.  Suppose its cycle decomposition contains
$(k_1,k_2,\dots,k_c)$-size cycles with $k_1+k_2+\dots+k_c=q$ and
$k_i<q$. Then $\sigma'=\sigma^{k_1 k_2\dots k_c}=e_G$.  Therefore,
$\rho(\sigma')=(k_1k_2\dots k_c)x = e_{\intp{q}}$. Since $x$ is drawn
from a cyclic group, this can happen only if $q | k_1 k_2\dots k_c$.
This is a contradiction since $k_i$ are smaller than $q$, and $q$ is a
prime.
\end{proof}

\begin{proof} [of Lemma~\ref{lem:pp-hom}]
First note that since $q$ is prime, $\rho$ is non-trivial, and the
primitive $\sigma\notin\ker\rho$, $\sigma^k\in\ker\rho$ if and only if
$k$ is a multiple of $q$.  Also, $\rho(\sigma^k) = e_{\intp{q}}$ if
and only if $k$ is a multiple of $q$. Further
$\sigma^{-1}=\sigma^{q-1}$ and $\rho(\sigma^{q-1})=
\rho(\sigma^{-1})\neq \rho(\sigma)$.

Suppose $\sigma'\in G$ is imprimitive; then so is
$\sigma\cdot\sigma'$.  By lemma~\ref{lem:imp-hom}, both $\sigma'$ and
$\sigma\cdot\sigma'$ are in $\ker\rho$ so that $\rho(\sigma)=
\rho(\sigma)\cdot\rho(\sigma')=
\rho(\sigma\cdot\sigma')=e_{\intp{q}}$, a contradiction. So
imprimitive permutations are not in $G$.

Suppose $\sigma''\in G$ is primitive, but there does not exist an
integer $k$ such that $\sigma^k=\sigma''$. Then $\sigma\cdot\sigma'$
is imprimitive and has a non-trivial cycle decomposition, say
$(y_1,y_2,\dots,y_c)$. Let $y=y_1 y_2\dots y_c$. Then,
$\rho(\sigma)^y\rho(\sigma'')^y=\rho(\sigma\cdot\sigma'')^y
=e_{\intp{q}}$. Therefore, $\rho(\sigma^{-1})^y=\rho(\sigma'')^y$.
Since $q$ does not divide $y$, we conclude that $\rho(\sigma'')
=\rho(\sigma^{-1})$. Since $\sigma^{-1}=\sigma^{q-1}$ is primitive and
$\sigma''\neq (\sigma^{q-1})^k$ for any integer $k$ , we could have
similarly concluded that $\rho(\sigma'')=\rho((\sigma^{-1})^{-1})
=\rho(\sigma)$. Therefore, we have $\rho(\sigma)=\rho(\sigma')$, a
contradiction.

So the only elements in $G$ are of the form $\sigma^k$ for some
integer $k$.
\end{proof}

\begin{proof} [of Lemma~\ref{lem:tp-hom}]
The group $G$ is cyclic by lemma~\ref{lem:pp-hom}.  The proof follows
directly from this fact.
\end{proof}

\section{Other schedules}
\subsection{``2.5D''-algorithm}
\label{sec:2-5D}

Let $q=\sqrt{p/c}$.  The ``2.5D''-algorithm \cite{SD11} achieves
communication optimality on a 3D-toroidal network of dimensions $q
\times q \times c$ with the network group $\intp{q} \times \intp{q}
\times \intp{c}$ with generators $g_x, g_y, g_z$ and identity element
$e_{3D}$. Suppose that $p$ is a multiple of $c^{3/2}$ and let
$t=q/c=p^{1/2}/c^{3/2}$. Then, the group $\perm{n}$ has the transitive
subgroup $\perm{n/ct}\wr (\perm{t}\wr \perm{c})$.  Further,
\[
\intp{t} \times\intp{c}
\cong\Shift{t}\times\Shift{c}
\leq \Shift{t}\wr \Shift{c}
\leq \perm{t}\wr \perm{c}.
\]
are all transitive subgroups of $\perm{tc}$. We will consider
homomorphisms from the subgroup $(\Shift{t}\times\Shift{c})$ which are
determined by the images of the generator $\shift^t$ of $\Shift{t}$
and the generator $\shift^c$ of $\Shift{c}$. Let $\rho':(\Shift{t}\times
\Shift{c})^3 \rightarrow (\torus{q}{2}\times\intp{c})\times \intp{t}$ be the
homomorphism
\begin{eqnarray*}
\rho'
&:((\shift^t,e_{\Shift{c}}),(e_{\Shift{t}},e_{\Shift{c}}),(e_{\Shift{t}},e_{\Shift{c}}))
&\mapsto (g_x,-\delta_t)\\
&:((e_{\Shift{t}},e_{\Shift{c}}),(\shift^t,e_{\Shift{c}}),(e_{\Shift{t}},e_{\Shift{c}}))
&\mapsto (e_{3D},\delta_t)\\
&:((e_{\Shift{t}},e_{\Shift{c}}),(e_{\Shift{t}},e_{\Shift{c}}),(\shift^t,e_{\Shift{c}}))
&\mapsto (g_y,-\delta_t)\\
&:((e_{\Shift{t}},\shift^c),(e_{\Shift{t}},e_{\Shift{c}}),(e_{\Shift{t}},e_{\Shift{c}}))
&\mapsto (e_{3D}, e_{\intp{t}})\\
&:((e_{\Shift{t}},e_{\Shift{c}}),(e_{\Shift{t}},\shift^c),(e_{\Shift{t}},e_{\Shift{c}}))
&\mapsto (g_z, e_{\intp{t}})\\
&:((e_{\Shift{t}},e_{\Shift{c}}),(e_{\Shift{t}},e_{\Shift{c}}),(e_{\Shift{t}},\shift^c))
&\mapsto (e_{3D}, e_{\intp{t}}).
\end{eqnarray*}  

The homomorphism $\homsched$ obtained by augmenting $\rho'$ with 
 the projection of $\perm{n/ct}$ to the identity
$e_{(\torus{q}{2}\times\intp{c})\times\intp{t}}$ corresponds to the
``2.5D''-schedule \cite{SD11}. The ``2.5D''-schedule (i) partitions
the torus into $c$ 2D-tori layers of size $q \times q$, (ii) assigns
one copy of $A,B$ and $C$ to each of the $c$ layers, (iii) maps the
variables to the nodes in blocks of size $n\sqrt{c}/p \times
n\sqrt{c}/p$, (iv) performs $t$-skewed steps of the Cannon's schedule
in each sub-grid.  The schedule must be supported by a suitable
replication at the beginning and a reduction of $C$ at the end.

\subsection{Hexagonal VLSI Array}

\begin{figure}[!htb]
\centering
  \includegraphics[width=0.49\textwidth]{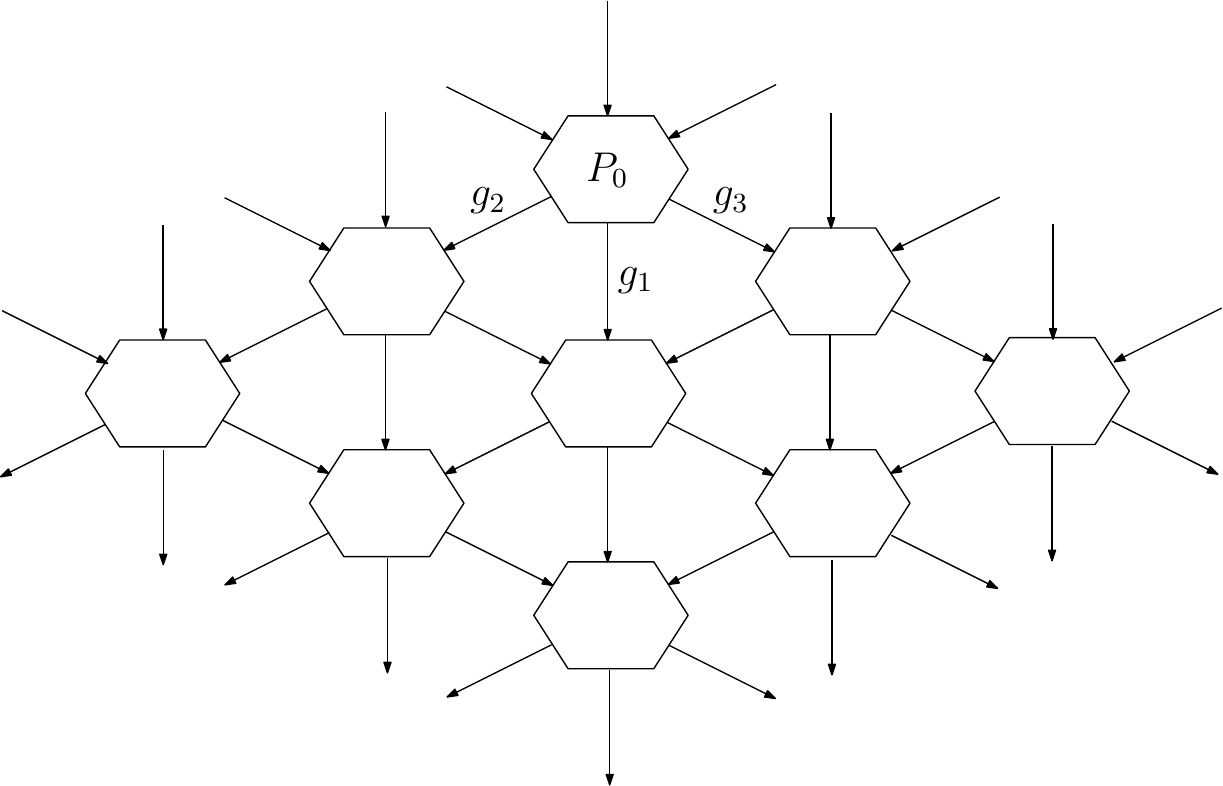}
  \caption{A Hexagonal VLSI array. Each node can multiply and
    accumulate. Nine nodes are shown.  The group action of generators
    of the network group is labeled on the node $P_0$. Their action on
    other nodes is similar. }
  \label{fig:hex}
\end{figure}

The hexagonal VLSI array of multiply and accumulate nodes
from~\cite{KungVLSIalgos} can be modeled as the action of the group
$N=\gen{g_1, g_2, g_3\ |\ g_ig_j=g_jg_i,\ g_1=g_2g_3}$ (the infinite
abelian group generated by $g_1,g_2$ and $g_3$ such that $g_1=g_2g_3$)
on the infinite array of nodes pictured in Fig.~\ref{fig:hex}.  

To schedule a $q\times q\times q$-size multiplication, we consider
the subgroup $G=\Shift{q}\times\Shift{q}\times\Shift{q}$ subgroup of
$\sym{X}$. Time steps are modeled as the action of $\tinc=\intp{3q}$
on $T=\{t_i\}_{i\in[3q]}$ (there are no embeddings of $X$ in $P\times
T$ for fewer time steps). Suppose that $\delta_t\in\tinc$ increments
time steps by one; it generates $\tinc$.

The homomorphism $\homsched:G\rightarrow N\times\tinc$ imposed by the
following images of the generator set $\{g_1,g_2,g_3\}$ of $G$
corresponds to the schedule in Figure 3 of ~\cite{KungVLSIalgos},
reproduced here in Fig.~\ref{fig:vlsihex}.

\begin{eqnarray*}
\homsched
:(\shift, e_{\Shift{q}},e_{\Shift{q}})
& \mapsto &(g_2, \delta_t)\\
:(e_{\Shift{q}},\shift, e_{\Shift{q}})
& \mapsto &(-g_1,\delta_t)\\
:(e_{\Shift{q}}, e_{\Shift{q}},\shift)
& \mapsto &(g_3,\delta_t)
\end{eqnarray*}

We simply have to anchor the schedule somewhere in the infinite array
by choosing some value for $\sched(X_{000})$ which fixes the rest of
the schedule. The corresponding homomorphism $\homloc$ can be easily
calculated. The homomorphism $\mu$ in $\homloc$ does not change with
time (as in Cannon's algorithm) and models the time-invariant
``direction, speed and timing'' of data movement referred to
in~\cite{KungVLSIalgos}.

\begin{figure}
\centering
  \includegraphics[width=0.49\textwidth]{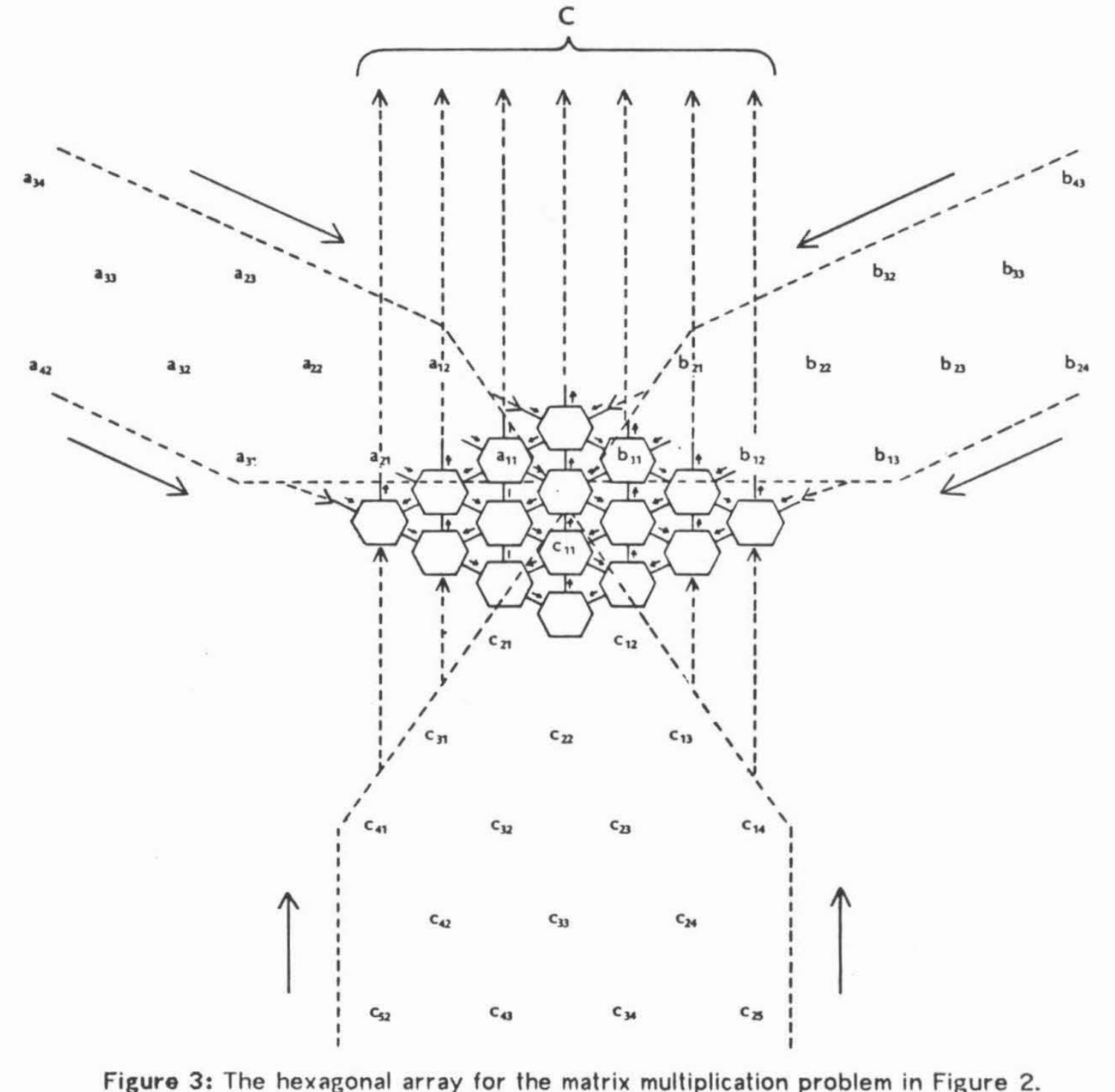}
  \caption{Systolic computation for matrix multiplication on VLSI
    array. Reproduced from ~\cite[Fig.~3]{KungVLSIalgos}.}
  \label{fig:vlsihex}
\end{figure}

\section{Acknowledgments}
I thank Katherine Yelick for motivating the central question of this
paper, and James Demmel and members of the BeBOP group at UC Berkeley,
including Evangelos Georganas, Penporn Koanantakool and Nicholas
Knight, for patiently listening to several iterations of these
ideas. I thank Joseph Landsberg whose course inspired this line of
thought.  I thank Niranjini Rajagopal for comments on the draft.  This
work is funded by the U.S. Department of Energy, Office of Science,
Office of Advanced Scientific Computing Research, Applied Mathematics
and Computer Science programs under contract No. DE-AC02-05CH11231,
through the Dynamic Exascale Global Address Space (DEGAS) programming
environments project.


\end{document}